\documentclass{article}

\usepackage{amsmath,amssymb,amsthm}
\usepackage[T1]{fontenc}
\usepackage{tikz}
\usepackage{lscape}
\usepackage{multirow}

\theoremstyle{definition}
\newtheorem{df}{Definition}[section]
\newtheorem{pr}{Example}[section]
\theoremstyle{plain}
\newtheorem{teo}{Theorem}

\title{Proximity-based equivalence classes in fuzzy relational database model}
\author{Aleksandar Janji\'c}
\begin{document}
\maketitle

\begin{abstract}

One of the first attempts to set a solid theoretical foundation for extending the content of relational databases with incomplete information was the fuzzy relational model by Buckles and Petry from 1982. This structure was based on two generalizations of the traditional relational model: (1) A tuple component is in general any subset of the corresponding domain, rather than a single element and (2) A \textit{similarity} relation is defined on each domain. This relation satisfies the properties of reflexivity, symmetry and max-min transitivity, thus having the equality relation as a special case. This generalization keeps two key properties of the relational model - that no two different tuples represent the same information and that the application of any operation of the relation algebra has a unique result.

At the end of the 1980s, Shenoi and Melton generalized this model and showed how its main characteristic - the existence of equivalence classes over the attribute domains - can also be preserved with a relation that only satisfies the properties of reflexivity and symmetry (\textit{proximity relation}). The motivation for this generalization is the strictness of the max-min transitivity property of similarity relations, which complicates the construction of this relation for some domain types.

An important characteristic of the Shenoi-Melton model is the dependence of the equivalence classes upon the so-called temporal domains, i.e. the current content of the database. This characteristic, together with the way the equivalence relation is constructed by the proximity relation, can lead to the equivalence classes that don't correspond well with some database query types, because they contain a large number of elements with very little mutual similarity.

Here we will present a different way of forming the equivalence classes over the attribute domains in fuzzy relational databases in which they depend only on the attribute domain and not on the current database state. We will also show a simple method for automatic construction of proximity relations over some domain types.

\end{abstract}

\section{Introduction}

The fuzzy relational database model of Buckles and Petry \cite{buckles1,petry1} is based on a weakened assumption of the definition of the first normal form in traditional relational databases. The request that the value of any attribute be a single element of the corresponding domain is relaxed by allowing it to be any non-empty subset of the domain. Moreover, with each domain is associated a \textit{similarity} relation that satisfies the properties of reflexivity, symmetry and max-min transitivity. This properties enable the forming of equivalence classes over each domain.

Shenoi and Melton \cite{shenoi1,shenoi2} showed a way of forming the equivalence classes without the similarity relation, by using a relation that satisfies only the reflexivity and symmetry properties (\textit{proximity} relation). The max-min transitivity property is very strict and on some domains it is hard to define in a natural way a relation that satisfies it.

An important characteristic of this model, namely that the equivalence classes depend on the so-called temporal domains, i.e. the current content of the database, can lead to equivalence classes that don't correspond well with some database query types, because they contain a large number of elements with very little mutual similarity. In this paper we will use proximity relations to define equivalence classes over fuzzy relational database domains that depend only on the domain and not on the current state of the database.

In the following chapter we will describe the fuzzy relational database model by Buckles and Petry and in the third chapter the Shenoi-Melton model. Afterwards, in the fourth chapter, we will show how equivalence classes can be formed by proximity relations on a linearly ordered one-dimensional domain. In Chapter 5 the method of Chapter 4 will be generalized to a two-dimensional domain. In Chapter six we will combine the examples used in the two previous chapters and show how equivalence classes are formed in relational database tables. Chapter seven contains a practical example demonstrating how different methods lead to different ways in which the equivalence classes are formed. Finally, in Chapter 8 we present our conclusions.

\section{Fuzzy relational database model}

\subsection{Relational database model}
First, we will very briefly describe the main features of the structural part of Codd's relational database model \cite{codd1} (the model consists of the structural, integrity and manipulative parts, but the letter two are of lesser importance for this paper). A \textit{relational scheme} is a set of \textit{attribute names} (or \textit{attributes} for short)$A_1,A_2,\dots,A_n$. To each attribute $A_i$, $1\leq i\leq n$ corresponds a set $D_i$, called the \textit{domain} of $A_i$. The domain of $A_i$ is usually denoted by $dom(A_i)$. The domains are non-empty, finite or countably infinite sets \cite{maier1}. Let $\mathbb{D}=D_1\cup D_2\cup\cdots D_n$. A \textit{tuple} $t$ on relation scheme $R=A_1,A_2,\dots,A_n$ is a mapping from $R$ to $\mathbb{D}$ such that $t(A_i)$ must be in $D_i$ for all $1\leq i\leq n$. A relation $r$ on $R$ is a collection of tuples on $R$. A relational \textit{database} is a collection of relations. 

Usually, relations are visually represented by way of rectangular tables. Each field is required to contain exactly one element, such that all elements in one column are of the same type. The columns of the table correspond to attributes, their data types to domains and the rows to tuples.

\subsection{Fuzzy relational database model by Buckles and Petry}

Buckles and Petry \cite{buckles1} have constructed one of the first fuzzy relational database model, with the goal to enable storing a wider class of unknown or imprecise data than was the case in the implementations of the traditional relational model. Those implementations were mostly limited to the possibility of assigning the NULL value \cite{codd1, zaniolo1} to a field in a table. NULL signifies the absence of the corresponding data, meaning that either the precise value is unknown or that the attribute itself is not applicable to the current tuple. Incomplete information like ''Tom's age is between 20 and 25'' could not be represented in those implementations.

The fuzzy relational database model of Buckles and Petry is based on two primary characteristics:

\begin{enumerate}
  \item The tuple components do not have to be singular values but can be arbitrary non-empty subsets of a previously given set.
  \item To each such set corresponds a similarity relation, which together with the \textit{merge} operation represents a generalization of the duplicate elimination based on equality relation from the traditional relational databases.
\end{enumerate}

These properties enable the operations of fuzzy relational algebra to be well defined.

\begin{df}
	
	Let non-empty sets $D_1$, $D_2$,$\dots$,$D_n$ be given and let $\mathbf{P}(D_i)=\mathcal{P}(D_i)\setminus\emptyset, i=\overline{1,n}$. A \textit{fuzzy tuple} over the sets $D_1$, $D_2$,$\dots$,$D_n$ is any element of the Cartesian product $\mathbf{P}(D_1)\times\mathbf{P}(D_2)\times\cdots\times\mathbf{P}(D_n)$.
	
\end{df}

Therefore, a fuzzy tuple $t$ is a set $t=(d_1,d_2,\dots,d_n)$, where $d_i\subset D_i$ for $i=\overline{1,n}$ and given sets $D_1,D_2$,$\dots,D_n$.

\begin{df} 
	Let $t=(d_1,d_2,\dots,d_n)$ be a tuple whose components $d_1,d_2,\dots,d_n$ are the subsets of the sets $D_1,D_2,\dots,D_n$, respectively. An \textit{interpretation} of a tuple $t$ is a tuple $\theta=(a_1,a_2,\dots,a_n)$ for which $a_i\in d_i$ for all
	$i=1,2,\dots,n$, i.e. each component of the tuple $\theta$ is a singleton subset of the corresponding tuple $t$.
\end{df}

\begin{df} 
	A \textit{fuzzy relation} over the sets $D_1$, $D_2$,$\dots$,$D_n$ is any set of fuzzy tuples over those sets, i.e. any subset of the Cartesian product $\mathbf{P}(D_1)\times\mathbf{P}(D_2)\times\cdots\times\mathbf{P}(D_n)$.
 	
\end{df}

As a generalization of the equality relation this model has a similarity relation over the sets $D_i$. This relation is based on a mapping which to each pair of elements of the set $D_i$ attaches a real value from the interval $[0,1]$ \cite{zadeh1}.

\begin{df}
	A similarity relation over a set $D$ is a mapping $s:D\times D\rightarrow[0,1]$ which satisfies the following properties:
	
\begin{enumerate}
  \item $s(x,x)=1$ for each element $x\in D$.
  \item $s(x,y)=s(y,x)$ for each pair of elements $x$ and $y$ from $D$.
  \item $s(x,z)\geq \min\{s(x,y),s(y,z)\}$ for any three elements $x$, $y$ and $z$ of the set $D$.
\end{enumerate}
\end{df}

In the relational database model each relation represents a set whose elements are the tuples of that relation. Defining the relation as a set means that any tuple appears in it at most once, i.e. that identical information is not present in different places at the same tame. Therefore, each operation of the relational algebra implies the elimination of duplicate tuples. The generalization of this concept in the fuzzy relational databases is based on the elimination of the \textit{redundant} tuples, depending on a previously given similarity threshold for each attribute's domain, $LEVEL(D)$, where $D$ is a domain. The elimination does not consist in deleting one of the redundant tuples, but their merging, i.e. the replacement of the existing tuples by a new tuple, each of whose components represents the union of the corresponding components of the given tuples.

\begin{df}
	
	Let $t_i=(d_{i1},d_{i2},\dots,d_{im})$, $i=1,2\dots,n$ be the tuples of a relation $R$, whose attributes have domains
	$D_1,D_2,\dots,D_m$, respectively. For the domain $D_j$, the \textit{similarity threshold} is the value
	
	\begin{equation*}
	THRES(D_j)=\min_{i=1,2,\dots,n}\{\min_{x,y\in d_{ij}} s_j(x,y)\},
	\end{equation*}
	where $s_j$ is the similarity relation of the domain $D_j$.
\end{df}

\begin{df} 
	Let $t_1=(d_{11},d_{12},\dots,d_{1m})$ and $t_2=(d_{21},d_{22},\dots,d_{2m})$ be tuples defined over the sets $D_1,D_2,\dots,D_m$. The tuples $t_1$ and $t_2$ are \textit{redundant} if for each domain $D_j$ $$LEVEL(D_j)\leq\min_{x,y\in d_{1j}\cup d_{2j}}s_j(x,y),$$ 
	where $LEVEL(D_j)$ is a given value in the interval $[0,1]$ and $s_j$ is the similarity relation over the set $D_j$.
\end{df}

It can be shown that the following theorems hold:

\begin{teo} 
	If a fuzzy relation $R$ contains no redundant tuples, then in $R$ there are no different tuples that have a common interpretation.
\end{teo}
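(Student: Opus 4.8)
The plan is to argue by contraposition: I will show that if $R$ contains two distinct tuples $t_1$ and $t_2$ that admit a common interpretation, then $t_1$ and $t_2$ are redundant, contradicting the hypothesis on $R$. First I would unwind the definition of a common interpretation. If $\theta=(a_1,\dots,a_m)$ is an interpretation of both $t_1=(d_{11},\dots,d_{1m})$ and $t_2=(d_{21},\dots,d_{2m})$, then $a_j\in d_{1j}$ and $a_j\in d_{2j}$, so $a_j\in d_{1j}\cap d_{2j}$ for every $j$; conversely, the existence of such a $\theta$ is equivalent to $d_{1j}\cap d_{2j}\neq\emptyset$ for all $j$. So from now on I fix, for each $j$, an element $a_j\in d_{1j}\cap d_{2j}$, and the goal becomes to verify the defining inequality of redundancy, $LEVEL(D_j)\leq\min_{x,y\in d_{1j}\cup d_{2j}}s_j(x,y)$, for every $j$.

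The core of the proof is then a single estimate per domain. Fix $j$ and take arbitrary $x,y\in d_{1j}\cup d_{2j}$; I claim $s_j(x,y)\geq LEVEL(D_j)$, and there are three cases. If $x,y\in d_{1j}$, then $s_j(x,y)\geq\min_{u,v\in d_{1j}}s_j(u,v)\geq LEVEL(D_j)$, and symmetrically if $x,y\in d_{2j}$, using here the well-formedness of the components of $t_1$ and $t_2$, i.e. $THRES(D_j)\geq LEVEL(D_j)$. The remaining case is $x\in d_{1j}$ and $y\in d_{2j}$ (the reversed case being identical by symmetry of $s_j$): here I route through the common element $a_j$ and invoke max-min transitivity, $s_j(x,y)\geq\min\{s_j(x,a_j),s_j(a_j,y)\}$; since $x,a_j\in d_{1j}$ and $a_j,y\in d_{2j}$, both arguments of the minimum are $\geq LEVEL(D_j)$, hence so is $s_j(x,y)$. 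Taking the minimum over all $x,y\in d_{1j}\cup d_{2j}$ gives the required inequality, and since $j$ was arbitrary, $t_1$ and $t_2$ are redundant.

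The one genuinely delicate point — the step I expect to need care — is the appeal to $THRES(D_j)\geq LEVEL(D_j)$, i.e. that within a single tuple component all elements are already pairwise similar at level at least $LEVEL(D_j)$. Without this, the statement fails: two tuples whose $j$-th components are $\{a\}$ and $\{a,b\}$ with $s_j(a,b)<LEVEL(D_j)$ share the interpretation $a$ yet are not redundant. I would therefore make explicit at the outset that $R$ is a well-formed fuzzy relation in the Buckles–Petry sense, which is exactly what supplies this inequality and is also what makes the redundancy-merging operation meaningful. Finally I would remark that max-min transitivity was used essentially, and only, in the split case — precisely the property whose strictness is relaxed in the proximity-based development that follows.
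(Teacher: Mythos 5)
Your proof is correct, and it is the standard Buckles--Petry argument: route the cross-component comparison through the shared interpretation element via max-min transitivity, and handle within-component pairs by the validity of the tuples. The paper itself states this theorem without proof (deferring to the original sources), so there is nothing to compare line by line, but your explicit identification of the hypothesis $THRES(D_j)\geq LEVEL(D_j)$ --- which the paper only mentions parenthetically as tuple ``validity'' and without which the statement is false, as your $\{a\}$ versus $\{a,b\}$ example shows --- is exactly the point that needs to be made.
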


\begin{teo} 
	Let $R$ be a fuzzy relation and $M$ the relation derived from $R$ by merging all the redundant tuples. Then $M$ is unique.
\end{teo}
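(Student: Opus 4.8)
The plan is to describe the merging process combinatorially and to show that its terminal state is forced, independently of the order in which redundant pairs are merged. Assume $R=\{t_1,\dots,t_n\}$ is finite, and form the \emph{redundancy graph} $G$ on the vertex set $\{t_1,\dots,t_n\}$, with an edge between $t_i$ and $t_k$ exactly when these two tuples are redundant. For a subset $S\subseteq R$ let $\bigcup S$ denote the tuple whose $j$-th component is $\bigcup_{t_i\in S}d_{ij}$; since componentwise union is associative, this is precisely the tuple obtained by merging, in any order, all the members of $S$. The theorem follows once we establish two claims: (i) every connected component $C$ of $G$ with $|C|\ge2$ is \emph{internally coherent}, i.e.\ $LEVEL(D_j)\le s_j(x,y)$ for every $j$ and all $x,y$ in the $j$-th component of $\bigcup C$; and (ii) during the merging process only tuples from a common component of $G$ can be combined, and the process cannot stop while some component is still split among two or more current tuples. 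Given (i) and (ii), the final relation must be $M=\{\,\bigcup C : C\text{ a connected component of }G\,\}$, which visibly does not depend on any choices.

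Claim (i) is the heart of the matter, and it is exactly here that reflexivity and symmetry are insufficient and the max--min transitivity of the similarity relations $s_j$ is used. I would prove, by induction on $l$, the auxiliary statement: if $t_{i_0}-t_{i_1}-\cdots-t_{i_l}$ is a path in $G$, then for every domain $D_j$ and all $x\in d_{i_0 j}$, $y\in d_{i_l j}$ we have $s_j(x,y)\ge LEVEL(D_j)$. The base case $l=1$ is the definition of redundancy. For the step, pick any $z\in d_{i_{l-1}j}$ (the component is non-empty), apply the inductive hypothesis to the subpath to get $s_j(x,z)\ge LEVEL(D_j)$ and the last edge to get $s_j(z,y)\ge LEVEL(D_j)$, and invoke the transitivity property: $s_j(x,y)\ge\min\{s_j(x,z),s_j(z,y)\}\ge LEVEL(D_j)$. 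To also cover $x,y$ lying in the \emph{same} constituent $d_{i_pj}$ of a component $C$ with $|C|\ge2$, note that $t_{i_p}$ has some neighbour $t_{i_q}$ in $C$ and chain $x\to z\to y$ through a single $z\in d_{i_qj}$. Hence every component of size at least $2$ is internally coherent. (A one-vertex component is left untouched and need not be internally coherent, which is harmless.)

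For claim (ii) I would use that a minimum only decreases when its range is enlarged. Suppose a redundant pair $t,t'$ from a component $C$ is merged into $t\cup t'$. If $t''$ lies in a different component $C'$, then $t$ and $t''$ are not redundant, so on some domain $D_j$ the minimum of $s_j$ over $d_{tj}\cup d_{t''j}$ is below $LEVEL(D_j)$, hence also over the larger set $d_{tj}\cup d_{t'j}\cup d_{t''j}$, so $t\cup t'$ is not redundant with $t''$; if instead $t''\in C\setminus\{t,t'\}$, internal coherence of $C$ from (i) gives $s_j\ge LEVEL(D_j)$ on all of $d_{tj}\cup d_{t'j}\cup d_{t''j}$ for every $j$, so $t\cup t'$ \emph{is} redundant with $t''$. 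Thus the induced partition of the current relation into connected components is preserved by the merge (one component simply has two of its vertices fused), no merge ever bridges two original components, and an original component still split among $\ge2$ current tuples always contains a redundant pair among them, so the process must continue until each component has become a single tuple — necessarily $\bigcup C$. Each merge reduces the tuple count by one, so the process terminates, and by the above its output is the $M$ displayed in the first paragraph; moreover $M$ is a set of pairwise distinct tuples, since $\bigcup C=\bigcup C'$ with $C\ne C'$ would, both being internally coherent, make $C$ and $C'$ redundant, a contradiction.

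The single real obstacle is claim (i): without transitivity, merging $t_1$ with $t_2$ can "lose" the redundancy edge $t_2t_3$ (and one can exhibit a three-tuple relation on a single domain where the two possible merge orders give genuinely different results), so the outcome would be order-dependent; max--min transitivity is exactly the condition that promotes edgewise redundancy of a connected blob of tuples to global coherence, after which (ii) and the uniqueness conclusion are straightforward bookkeeping with minima.
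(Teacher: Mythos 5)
Your argument is correct: the paper states this theorem without proof (it is imported from Buckles and Petry), and your reduction to the redundancy graph --- using max--min transitivity to upgrade pairwise redundancy along a path to internal coherence of each connected component, then checking that a merge neither bridges two components nor lets the process stall while a component is still split --- is essentially the canonical argument for this result. Your closing remark correctly isolates transitivity as the step that fails for mere proximity relations, which is precisely the gap the Shenoi--Melton transitive closure $S_\alpha^+$ discussed later in the paper is designed to close.
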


\begin{teo}
	Redundancy is an equivalence relation.
\end{teo}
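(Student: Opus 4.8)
The plan is to verify directly the three defining properties of an equivalence relation for the redundancy relation on the tuples of a given fuzzy relation $R$, checking each inequality attribute by attribute.

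\emph{Symmetry} is immediate: the condition defining redundancy of $t_1$ and $t_2$ refers only to the sets $d_{1j}\cup d_{2j}$, and since $d_{1j}\cup d_{2j}=d_{2j}\cup d_{1j}$ for every $j$, interchanging $t_1$ and $t_2$ leaves every inequality unchanged. For \emph{reflexivity}, I would note that a tuple $t=(d_1,\dots,d_m)$ is redundant with itself exactly when $LEVEL(D_j)\le\min_{x,y\in d_j}s_j(x,y)$ for each $j$; this is precisely the requirement that every component be internally similar above the chosen level, which holds for the tuples appearing in $R$ because $LEVEL(D_j)\le THRES(D_j)\le\min_{x,y\in d_j}s_j(x,y)$ by the way the thresholds are chosen, and which is preserved by merging, since merging two redundant tuples produces components $d_{1j}\cup d_{2j}$ whose internal similarity is, by the very definition of redundancy, at least $LEVEL(D_j)$.

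The substantive step is \emph{transitivity}, and this is where the max-min transitivity of each $s_j$ is essential. Suppose $t_1$ is redundant with $t_2$ and $t_2$ with $t_3$; fix an attribute with domain $D_j$ and take arbitrary $x,y\in d_{1j}\cup d_{3j}$. If $x$ and $y$ both lie in $d_{1j}$, or both in $d_{3j}$, then $s_j(x,y)\ge LEVEL(D_j)$ already follows from $x,y\in d_{1j}\cup d_{2j}$, respectively $x,y\in d_{2j}\cup d_{3j}$. The remaining case is $x\in d_{1j}$ and $y\in d_{3j}$: here I would choose some element $z\in d_{2j}$ — this is where the non-emptiness of tuple components is used — so that $x,z\in d_{1j}\cup d_{2j}$ and $z,y\in d_{2j}\cup d_{3j}$, whence $s_j(x,z)\ge LEVEL(D_j)$ and $s_j(z,y)\ge LEVEL(D_j)$; max-min transitivity then gives $s_j(x,y)\ge\min\{s_j(x,z),s_j(z,y)\}\ge LEVEL(D_j)$. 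Since $x,y$ and $j$ were arbitrary, $\min_{x,y\in d_{1j}\cup d_{3j}}s_j(x,y)\ge LEVEL(D_j)$ for every $j$, i.e. $t_1$ and $t_3$ are redundant.

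I expect the transitivity argument to be the only genuine obstacle — in particular the idea of routing a ``cross'' pair $x\in d_{1j},\ y\in d_{3j}$ through a witness $z\in d_{2j}$, together with the twin reliance on the non-emptiness of $d_{2j}$ and on the max-min transitivity of $s_j$; the other two properties are essentially bookkeeping. It is worth observing that this is exactly the point at which a mere reflexive and symmetric (proximity) relation would not suffice, which is one of the motivations for the alternative construction developed in the later sections.
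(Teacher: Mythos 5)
Your proof is correct, and it supplies an argument that the paper itself omits (the three theorems about the Buckles--Petry model are stated with ``it can be shown'' and left to the cited references), so there is nothing in the paper to diverge from. Your treatment is the standard one: symmetry from commutativity of union, reflexivity under the necessary standing assumption $LEVEL(D_j)\le THRES(D_j)$ (which you rightly flag, since the definition alone does not force a tuple to be redundant with itself), and transitivity by routing a cross pair $x\in d_{1j}$, $y\in d_{3j}$ through a witness $z$ in the non-empty component $d_{2j}$ and invoking max-min transitivity of $s_j$ --- which is indeed exactly the point that fails for a mere proximity relation and motivates the $S_\alpha^+$ construction later in the paper.
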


The first theorem guarantees that there is no ''overlapping'' of information in the tuples of a relation, namely that the existence of at least one common interpretation of two tuples implies complete redundancy of those tuples. Such tuples would be merged into one tuple whose components would contain all the elements of the corresponding components of the starting tuples. A special case of this process can be seen, for example, in the projection operation of the relational algebra. If after discarding a set of attributes in a relation some duplicate tuples remain, all but one of those tuples will be deleted from the relation. In the traditional relational databases all the attribute values in the tuples are single and the role of the similarity relation is played by strict equality. Thus, the elimination of redundant tuples is actually the merging of the tuples for the values $LEVEL(D_i)=1$ for each domain $D_i$, where $s$ is the equality relation.

Since the operations of the relational algebra in fuzzy relational databases are executed similarly to the traditional databases, except the final step, which contains the more general operation of tuple merging instead of the duplicate elimination, the second theorem guarantees that the result of each operation of the fuzzy relational algebra is unique, i.e. that the operations of the fuzzy relational algebra are well defined.

\begin{pr} \cite{petry1}
	Let $D_{Effect} =\{$Minimal, Limited, Tolerable, Moderate, Severe, Major, Extreme, Irreversible $\}$ be a set of linguistic values. These values represent the degrees of severity of the environmental polution caused by various poisonous substances or other conditions. Table \ref{simrelation} shows a similarity relation over this set.
	
	\begin{table}
		
		\begin{center}
			\begin{tabular}{|c|c|c|c|c|c|c|c|c|}
				\hline
				$s$ & Min & Lim & Tol & Mod & Sev & Maj & Ext & Irr \\
				\hline
				Minimal      & 1.00 & 0.90 & 0.90 & 0.85 & 0.75 & 0.75 & 0.75 & 0.00 \\
				Limited      & 0.90 & 1.00 & 0.95 & 0.85 & 0.75 & 0.75 & 0.75 & 0.00 \\
				Tolerable    & 0.90 & 0.95 & 1.00 & 0.85 & 0.75 & 0.75 & 0.75 & 0.00 \\
				Moderate     & 0.85 & 0.85 & 0.85 & 1.00 & 0.75 & 0.75 & 0.75 & 0.00 \\
				Severe       & 0.75 & 0.75 & 0.75 & 0.75 & 1.00 & 0.80 & 0.80 & 0.00 \\
				Major		 & 0.75 & 0.75 & 0.75 & 0.75 & 0.80 & 1.00 & 0.85 & 0.00 \\
				Extreme		 & 0.75 & 0.75 & 0.75 & 0.75 & 0.80 & 0.85 & 1.00 & 0.00 \\
				Irreversible & 0.00 & 0.00 & 0.00 & 0.00 & 0.00 & 0.00 & 0.00 & 1.00 \\
				\hline
			\end{tabular}
		\end{center}
		\caption{A similarity relation}
		\label{simrelation}
	\end{table}

Now, let the SURVEY relation be given, with the attributes Pollutant, Name, Effect and Type, where $D_{Effect}$ is the domain of the attribute Effect. This relation contains the opinions of experts and residents of a certain region about the effect of various pollution sources. An example of the relation's content is given in table \ref{survey}.

\begin{table}
	
	\begin{center}
		\begin{tabular}{|c|c|c|c|}
			\hline
			\textbf{Pollutant} & \textbf{Name} & \textbf{Effect} & \textbf{Type}\\
			\hline
			Oil      & A & Limited & Expert \\
			Oil      & B & Extreme & Resident \\
			Oil      & C & Moderate & Resident \\
			Oil      & D & Moderate & Expert \\
			Oil      & E & Tolerable & Resident \\
			Oil      & F & Severe & Resident \\
			Oil      & G & Tolerable & Expert \\
			Oil      & H & Limited & Expert \\
			Dioxin 	 & A & Severe & Expert \\
			Dioxin 	 & B & Irreversible & Resident \\
			Dioxin 	 & C & Major & Resident \\
			Dioxin 	 & D & Major & Expert \\
			Dioxin 	 & E & Severe & Resident \\
			Dioxin 	 & F & Extreme & Resident \\
			Dioxin 	 & G & Severe & Expert \\
			Dioxin 	 & H & Moderate & Expert \\
			Wastewater & A & Minimal & Resident \\
			Wastewater & C & Moderate & Resident \\
			Wastewater & D & Tolerable & Expert \\
			Wastewater & E & Limited & Resident \\
			Wastewater & F & Tolerable & Resident \\
			Wastewater & G & Limited & Expert \\
			Wastewater & H & Minimal & Expert \\

			\hline
		\end{tabular}
	\end{center}
	\caption{Table SURVEY}
	\label{survey}
\end{table}

Using this relation, we can make, for example, the following query: Which experts and residents are in a significant agreement about the ecological effects?
	
The execution of the query could unfold in the following way:

First the opinions of experts and residents are put into separate tables, by taking a projection of the table \ref{survey} over the attributes Pollutant, Name and Effect, with the appropriate restriction on the values of the Type attribute, where we take an arbitrary value for the similarity threshold of the Effect column (in this case it is 0.85). These projections will lead to tuple merging, so the resulting tables will contain multi-member sets as values of some columns.

$R_1=(\pi_{(Pollutant, Name, Effect)}(\sigma_{Type='Expert'})(SURVEY))$ with THRES(Effect)>0.85, THRES(Name)$\geq$0.0

$R_2=(\pi_{(Pollutant, Name, Effect)}(\sigma_{Type='Resident'})(SURVEY))$ with THRES(Effect)>0.85, THRES(Name)$\geq$0.0

The condition for the similarity threshold of the attribute Name is required to make the tuple merging possible. In case of leaving out the condition for similarity threshold of an attribute, its default value is 1. The Name attribute is crisp, i.e. the only possible values of the similarity relation are 0 and 1, so the only merging would occur for tuples having identical values in the Name column. The condition THRES(Name)$\geq$0.0 basically marks that attribute as irrelevant for the query, i.e. the tuples are merged depending on the values of the other attributes.

The values of the relations R1 and R2 are given in tables \ref{r1} and \ref{r2}.

\begin{table}
	
	\begin{center}
		\begin{tabular}{|c|c|c|}
			\hline
			\textbf{Pollutant} & \textbf{Name} & \textbf{Effect}\\
			\hline
			Oil      & $\{$A,D,G,H$\}$ & $\{$Limited, Moderate, Tolerable $\}$ \\
			Dioxin      & $\{$A,G $\}$ & Severe\\
			Dioxin      & D & Major\\
			Dioxin      & H & Moderate\\
			Wastewater      & $\{$A,D,G,H $\}$ & $\{$Minimal, Limited, Tolerable $\}$\\
			\hline
		\end{tabular}
	\end{center}
	\caption{Table R1}
	\label{r1}
\end{table}
\begin{table}
\begin{center}
	\begin{tabular}{|c|c|c|}
		\hline
		\textbf{Pollutant} & \textbf{Name} & \textbf{Effect}\\
		\hline
		Oil      & B & Extreme \\
		Oil     & $\{$C,E $\}$ & $\{$Moderate, Tolerable $\}$\\
		Oil      & F & Severe\\
		Dioxin      & B & Irreversible\\
		Dioxin      & $\{$C,F $\}$ & $\{$Major, Extreme $\}$\\
		Dioxin & E & Severe\\
		Wastewater & B & Severe\\
		Wastewater & $\{$C,E,F$\}$ & $\{$Limited, Tolerable, Moderate $\}$\\
		\hline
	\end{tabular}
\end{center}
\caption{Table R2}
\label{r2}
\end{table}

Finally, the result of the query is obtained by the natural join of the relations R1 and R2 on the values of the attributes Pollutant and Effect:

R3 = R1 JOIN R2 ON POLLUTANT,EFFECT WITH THRES(EFFECT)>0.85, THRES(NAME)>=0.0

The result of this operation is shown in table \ref{r3}.

\begin{table}
	\begin{center}
		\begin{tabular}{|c|c|c|c|}
			\hline
			\textbf{Pollutant} & \textbf{Name} & \textbf{Effect} & \textbf{Name}\\
			\hline
				Oil      & $\{$A,D,G,H$\}$ & $\{$Limited, Moderate, Tolerable $\}$ & $\{$C,E $\}$ \\ 
		
			Dioxin      & $\{$A,G $\}$ & Severe &F\\
			Dioxin      & D & $\{$Major, Extreme $\}$ & $\{$C,F $\}$\\
		
			Wastewater & $\{$A,D,G,H$\}$ & $\{$Minimal, Limited, Tolerable, Moderate $\}$ & $\{$C,E,F $\}$\\
			\hline
		\end{tabular}
	\end{center}
	\caption{Table R3}
	\label{r3}
\end{table}

Table R3 shows the lack of consensus about the effects of Dioxin, while there is much better agreement about oil and wastewater.

Note: The resulting table is copied verbatim from the original source. However, the relational model doesn't allow one table to contain two columns with the same name, so at least one column would have to be renamed during the natural join operation in the last query.
\end{pr}

\section{Proximity based model}

The validity of the key theorems in the Buckles-Petry model is made possible by the fact that similarity is an equivalence relation. However, in practice it is not always easy to define such a relation naturally, because of the strictness of the max-min transitivity condition. Shenoi and Melton cite a simple example: Let the domain $D$ contain the elements 1, 2 and 3 and assume we want to form a similarity relation over $D$. If we take $s(1,2)=0.8$ and $s(2,3)=0.8$, then the max-min transitivity requires that $s(1,3)$ be equal or greater than 0.8, which seems counter-intuitive.

Shenoi and Melton have extended the Buckles and Petry fuzzy relational model by substituting the similarity relation with the proximity relation. The proximity relation requires only the properties of reflexivity and symmetry and therefore represents a generalization of the similarity relation.

\begin{df} 
	The \textit{proximity relation} is a mapping $s_j:D_j\times D_j\rightarrow[0,1]$ such that for all $x,y\in D_j$:
\begin{enumerate}
  \item $s_j(x,x)=1,$ 
  \item $s_j(x,y)=s_j(y,x)$.
\end{enumerate}
\end{df}

The first step of the extension of the Buckles and Petry model is defining of the relations of $\alpha$-similarity and $\alpha$-proximity:

\begin{df} 
	Let $s_j$ be a proximity relation over the set $D_j$ and let there be given an arbitrary $\alpha\in[0,1]$. Then for the elements $x,y\in D_j$ we say that they are $\alpha$-\textit{similar}, denoted as $xS_\alpha y$, if and only if $s_j(x,y)\geq\alpha$.
	
\end{df}

\begin{df} 
	Let $s_j$ be a proximity relation over the set $D_j$ nad let an arbitrary $\alpha\in[0,1]$ be given. Then we say that the elements $x,z\in D_j$ are $\alpha$-\textit{proximate}, denoted as $xS_\alpha^+ z$ if and only if $xS_\alpha z$ or there exists a sequence of elements $y_1,y_2,\dots,y_r\in D_j$ such that $xS_\alpha y_1$, $y_1S_\alpha y_2$,\dots, $y_rS_\alpha z$.
\end{df}

It is easily shown that $S_\alpha^+$ is an equivalence relation and, therefore, it determines a set of partitions of the domain $D_j$. These partitions are identical to the sets produced in the Buckles and Petry model by repeating the tuple merging operation as long as it leads to valid tuples (a tuple is valid if all elements of each of its components are mutually similar to a degree equal to or greater than some given similarity threshold corresponding to that component's domain).

If $s$ is a similarity relation, then the relations $S_\alpha$ and $S_\alpha^+$ form equal equivalence classes.

All definitions in this domain are formulated for the so-called temporal domains, which represent the union of all attribute values currently appearing in the relation. Therefore, in this case, the content of the equivalence classes depends not only on the relation $s$ and domain $D_j$, but also on the current state of the database.

The definition of the relation $S_\alpha^+$, which ''compensates'' for the lack of the max-min transitivity, influences that some elements that are not mutually similar be drawn into the same equivalence class through some linking elements. The absence of some of these elements would mean breaking the chain and dividing the equivalence class. Similarly, the appearance of a new element can lead to the merging of some equivalence classes. Therefore, the insert and delete operations have a large influence on the appearance of the equivalence classes.

\begin{pr} \cite{shenoi1}
	A table containing the information on the physical characteristics of several persons is given. It includes information about the name, build and hair colour (table \ref{physcar}).
	
	\begin{table}
		\begin{center}
			\begin{tabular}{|c|c|c|}
				\hline
				\textbf{Name} & \textbf{Hair Color} & \textbf{Build}\\
				\hline
				Albert & Black & Large \\
				Bob & Dark brown & Average \\
				Charles & Auburn & Average \\
				David & Blond & Small \\
				Eugene & Dark brown & Average \\
				Frank & Red & Very small \\
				Gary & Bleached & Very large \\
				Henry & Blond & Average \\
				Ivan & Auburn & Large \\
				James & Blond & Large \\
				\hline
			\end{tabular}
		\end{center}
		\caption{Table PHYSICAL CHARACTERISTICS}
		\label{physcar}
	\end{table}

Over the domains of Hair Color and Build are defined the proximity relations shown in tables \ref{simhair} i \ref{simbuild}.

\begin{table}
	\begin{center}
		\begin{tabular}{|c|c|c|c|c|c|c|c|}
			\hline
			$s$ & Bk & DB & A & R & LB & Bd & Bc \\
			\hline
			Black (Bk) & 1.0 & 0.8 & 0.6 & 0.5 & 0.4 & 0.3 & 0.1 \\
			Dark brown (DB) & 0.8 & 1 & 0.7 & 0.6 & 0.6 & 0.5 & 0.2 \\
			Auburn (A) & 0.6 & 0.7 & 1 & 0.8 & 0.7 & 0.4 & 0.3 \\
			Red (R) & 0.5 & 0.6 & 0.8 & 1 & 0.7 & 0.5 & 0.4 \\
			Light brown (LB) & 0.4 & 0.6 & 0.7 & 0.7 & 1 & 0.7 & 0.5\\
			Blond (Bd) & 0.3 & 0.5 & 0.4 & 0.5 & 0.7 & 1 & 0.8 \\
			Bleached (Bc) & 0.1 & 0.2 & 0.3 & 0.4 & 0.5 & 0.8 & 1 \\
			\hline
		\end{tabular}
	\end{center}
	\caption{The hair color similarity table}
	\label{simhair}
\end{table}

\begin{table}
	
	\begin{center}
		\begin{tabular}{|c|c|c|c|c|c|}
			\hline
			$s$ & VL & L & A & S & VS \\
			\hline
			Very large (VL) & 1 & 0.7 & 0.5 & 0.3 & 0.1 \\
			Large (L) & 0.7 & 1 & 0.6 & 0.4 & 0.2 \\
			Average (A) & 0.5 & 0.6 & 1 & 0.6 & 0.4 \\
			Small (S) & 0.3 & 0.4 & 0.6 & 1 & 0.7 \\
			Very small (VS) & 0.1 & 0.2 & 0.4 & 0.7 & 1 \\
			\hline
		\end{tabular}
	\end{center}
	\caption{The build similarity table}
	\label{simbuild}
\end{table}

Assume that the persons in the database are suspects in an arson investigation. The culprit is suspected to have blond hair and large build, but due to the unreliability of the only witness it is necessary to allow certain deviations from the listed characteristics. One possible version of a fuzzy relational algebra query \cite{shenoi1} would be:

(Project (Select (PHYSICAL CHARACTERISTICS)
where HAIR COLOR = "Blond",
BUILD = "Large"
with LEVEL(HAIR COLOR) = 0.7,
LEVEL(BUILD) = 0.7)
with LEVEL(NAME) = 0.0,
LEVEL(HAIR COLOR) = 0.7,
LEVEL(BUILD) = 0.7
giving LIKELY ARSONISTS).

The result of the query is shown in the table \ref{suspects}.

\begin{table}
	
	\begin{center}
		\begin{tabular}{|c|c|c|}
			\hline
			\textbf{Name}&\textbf{Hair Color}&\textbf{Build}\\
			\hline
			$\{$Gary,James $\}$ & $\{$Blond, Bleached $\}$ & $\{$Very large, large $\}$\\
			\hline
		\end{tabular}
	\end{center}
	\caption{LIKELY ARSONISTS}
	\label{suspects}
\end{table}

In case that the database contains an additional person, say Kevin, with light brown hair and large build, the query result would consist of a single tuple containing the names of five suspects \cite{shenoi1}.
\end{pr}

\section{Using proximity relations to form equivalence classes on one-dimensional domains}

The Shenoi and Melton model, thanks to using proximity relations instead of similarity relations, provides to the users a much more natural way of defining the relationships between the domain elements and applicability to a much wider class of domains, particularly to a very important special case of the linearly ordered domains. However, this model is not without its drawbacks. It is natural, except possibly in some very special cases, that during the life of a database its tables grow, i.e. that their columns contain more and more elements of their respective domains. It is possible that at some point all or nearly all of the domain values be present in the database. Even when it is not the case, it is highly probable that some interval of characteristic values will be filled.

For example, in a database containing information about a large number of sports players, aside from the few most famous, the income of most others would be within some densely populated interval, where any two players will rarely have the exact same income, but there will be a large number of those whose salaries differ by very little. In cases such as this forming the equivalence classes in the way described in the previous chapter loses meaning, because  the starting value of $\alpha$ would have negligible influence on the appearance of the classes. It is easy to construct an example where such values as $\alpha=0.8$ and $\alpha=0.2$ lead to identical equivalence classes.

Another problem is related to the very meaning of the relation $S_\alpha^+$, which is not completely clear. 

This relation makes sense in cases where the proximity of the elements is defined by some time-dependent characteristic represented in the database. For example, given a set of cities, assume we define the proximity by the degree of ease and cost of arriving from one city to the other by using the available modes of transportation. Let's say that we can travel from the city A to the city B by plain, and then from B to C by bus. Even if the (geographical) distance between A and C is great, the existence of these transportation lines could in some context put them in the same equivalence class. However, assume that for some reason planes stop flying between A and B. It is quite probable that we can still go from A to C some other way, but now the journey is prolonged and more complicated, with an almost certain increase of the price. It is logical to imagine that in the first case A and C are more $\alpha$-proximate than in the second, i.e. that they belong to a same equivalence class for a higher $\alpha$ than in the second case.

However, such a relation doesn't give good results for domains in which the similarity of the elements is observed through their defining (and thus unchangeable) characteristics. The difference between black and blond hair is big because of the very nature of those colours, and it is counter-intuitive that those two different colours appear in the same equivalence class just because the database is filled up with various colours between them. In this case, it appears that dependence of the classes on the temporal domain represents an undesirable feature.

Therefore it seems desirable to formally define a way to form equivalence classes with a proximity relation, but without the domain dependence. In this chapter we will show a simple way of forming these classes in the case when the domain is a subset of the real line. The result is easily extended to any linearly ordered set, and even to some sets of linguistic terms (which often represent labels of fuzzy sets), which will later be illustrated by an example. In the next chapter we will apply a similar method to a two-dimensional set.

Let $I=[0,L]$, $L>0$, be an interval of real numbers. We will define a function 
$p:I\times I\rightarrow[0,1]$, \begin{equation}\label{proximityformula}
p(a,b)=1-\frac{|b-a|}{L}.
\end{equation} 

Obviously, $p(a,a)=1$ for each $a\in I$ and $p(a,b)=p(b,a)$ for each pair of elements $a$ and $b$ from $I$. We will call $p$ the \textit{proximity function} of the elements of the set $I$. In the Buckles-Petry model the similarity relation satisfies all properties of the definition of the equivalence relation, so it naturally partitions the domain into disjoint subsets whose union contains the set $I$. In is necessary to form equivalence classes by using a proximity function, namely without assuming the max-min transitivity property. More precisely, for a given value $\alpha\in(0,1]$ we will construct a family of sets $\{I_k\}$ such that $I_k\cap I_l=\emptyset$ for $k\neq l$ and $\cup_{k}I_k=I$, where $p(x,y)\geq\alpha$ for any two elements $x$ and $y$ that belong to the same set $I_k$.

From the definition of the function $p$ it follows that $p(a,b)\geq\alpha$ if and only if $|a-b|\leq(1-\alpha)L$. Let $m=(1-\alpha)L$ and $n=\lfloor\frac{1}{1-\alpha}\rfloor$. Let $I_k=[km,(k+1)m)$ for $k=0,1,\dots,n$.  The first few sets of this family are $I_0=[0,m)$, $I_1=[m,2m)$, $I_2=[2m,3m)$ etc. The sets $I_k$ are obviously disjoint, and we will also show that they cover the entire interval $I$, i.e. that $\cup_{k=0}^{n}I_k\supset I$. $n$ represents the number of intervals $I_k$ that have equal lengths. If $\frac{1}{1-\alpha}$ is an integer, then the number of the intervals $I_k$ is $n$, otherwise there are $n+1$ of them.

From the definition of the floor function follows the inequality $\lfloor\frac{1}{1-\alpha}\rfloor\leq\frac{1}{1-\alpha}\leq\lfloor\frac{1}{1-\alpha}\rfloor+1$. By multiplying it by $m$ we get $\lfloor\frac{1}{1-\alpha}\rfloor m\leq\frac{1}{1-\alpha}m\leq(\lfloor\frac{1}{1-\alpha}\rfloor+1)m$. Since $\lfloor\frac{1}{1-\alpha}\rfloor=n$ by definition, we get $nm\leq\frac{1}{1-\alpha}m\leq(n+1)m$. $m=(1-\alpha)L$, so the middle member will equal  $L$, i.e. $nm\leq L\leq(n+1)m$, so the right border of the interval $I$ is contained in the last set of the family, $I_n$. Therefore, instead of the interval $[nm,(n+1)m)$, by $I_n$ we will denote the set $[nm,(n+1)m)\cap I$, i.e. $I_n=[nm,L]$. Such a construction of the family of intervals $I_k$ enables that each element of the set $I$ belongs to precisely one of the sets $I_k$.

Now let $x$ and $y$ be members of the set $I$. We will define a relation $E$ on the set $I$ in the following way: $xEy$ if and only if there exists $k\in\overline{0,n}$ such that $x$ and $y$ belong to the same set $I_k$. It is easy to see that the following theorem holds.

\begin{teo} $E$ is an equivalence relation.
\end{teo}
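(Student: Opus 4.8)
The plan is to observe that the construction preceding the statement has already equipped $I$ with a partition: the sets $I_0, I_1, \dots, I_n$ are pairwise disjoint, and — as was just verified — $\bigcup_{k=0}^{n} I_k = I$, with the convention $I_n = [nm, L]$ taking care of the right endpoint. Once this is in hand, $E$ is simply the ``lie in the same block'' relation of that partition, and I would verify the three defining properties directly rather than invoke a general principle.

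For reflexivity, take any $x \in I$; since the sets $I_k$ cover $I$, there is a $k \in \overline{0,n}$ with $x \in I_k$, and then $x$ and $x$ lie in the common set $I_k$, so $xEx$. Symmetry is immediate, since the defining clause ``there exists $k$ with $x \in I_k$ and $y \in I_k$'' is unchanged when $x$ and $y$ are swapped. For transitivity, suppose $xEy$ and $yEz$, witnessed by indices $k$ and $l$ with $x,y \in I_k$ and $y,z \in I_l$; then $y \in I_k \cap I_l$, and disjointness of the family forces $k = l$, whence $x,z \in I_k$ and therefore $xEz$.

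There is essentially no obstacle here: the only point needing care is that every element of $I$ lies in \emph{exactly} one of the $I_k$ — existence of such a $k$ gives reflexivity, uniqueness gives transitivity — and this is precisely what the earlier analysis of the intervals $[km,(k+1)m)$ together with the truncation $I_n = [nm,L]$ establishes. To round off the argument I would note that the equivalence classes of $E$ are exactly the non-empty members of $\{I_0,\dots,I_n\}$, and that since each such set is contained in an interval of length $m = (1-\alpha)L$, any $x,y$ in the same class satisfy $|x-y| \le m$, hence $p(x,y) = 1 - |x-y|/L \ge \alpha$; so $E$ is not merely an equivalence relation but one whose classes meet the required proximity bound.
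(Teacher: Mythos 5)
Your proof is correct and is exactly the argument the paper has in mind (the paper omits it as ``easy to see''): the intervals $I_0,\dots,I_n$ partition $I$, and $E$ is the induced ``same block'' relation, with existence of a containing block giving reflexivity and disjointness giving transitivity. The closing remark about $p(x,y)\ge\alpha$ is a correct bonus, though it duplicates the paper's next theorem rather than being needed here.
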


Of course, the equivalence classes defined by $E$ are exactly the sets $I_k$.

\begin{teo} If $xEy$, then $p(x,y)\geq\alpha$.
\end{teo}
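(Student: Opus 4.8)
The plan is to reduce the statement to the distance characterization established just before the theorem, namely that $p(a,b)\geq\alpha$ holds precisely when $|a-b|\leq(1-\alpha)L=m$. Thus it suffices to check that any two elements lying in a common set $I_k$ are at distance at most $m$. I would split into two cases according to whether $k<n$ or $k=n$, since $I_0,\dots,I_{n-1}$ are half-open intervals of length exactly $m$, while the last set $I_n=[nm,L]$ must be handled on its own.

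For the first case, suppose $x,y\in I_k=[km,(k+1)m)$ with $k<n$. Both points lie in an interval of length $m$, so $|x-y|<m$; by the monotonicity of $p$ in $|x-y|$ this gives $p(x,y)=1-\frac{|x-y|}{L}>1-\frac{m}{L}=\alpha$, and in particular $p(x,y)\geq\alpha$.

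For the second case, suppose $x,y\in I_n=[nm,L]$. Then $|x-y|\leq L-nm$. Here I would invoke the inequality $nm\leq L\leq(n+1)m$ already derived in the text from the definition of the floor function; its right half gives $L-nm\leq m$, hence $|x-y|\leq m$ and therefore $p(x,y)\geq\alpha$.

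Since $xEy$ means exactly that $x$ and $y$ belong to a common $I_k$, one of the two cases always applies, which finishes the argument. The only step that is not a one-line verification is the last-interval case, and even there the bound $L-nm\leq m$ has been worked out in the preceding construction; so the theorem is essentially immediate once the family $\{I_k\}$ and the distance characterization of $p(a,b)\geq\alpha$ are in place.
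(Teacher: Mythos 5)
Your proof is correct and follows essentially the same route as the paper's: reduce to the characterization $p(x,y)\geq\alpha \iff |x-y|\leq m$ and bound the diameter of each class $I_k$ by $m$. In fact you are slightly more careful than the paper, which writes every class as $[km,(k+1)m)$ and ignores that the last class is $[nm,L]$; your separate case using $L-nm\leq m$ closes that small gap, though the conclusion is the same.
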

\begin{proof} 
	If $xEy$, then from the definition of $E$ follows that there exists a set $I_k=[km,(k+1)m)$ such that both $x$ and $y$ belong to $I_k$. The length of $I_k$ is $|I_k|=(k+1)m-km=m$, so $|x-y|\leq m$. Therefore, $|x-y|\leq(1-\alpha)L$. Now, $p(x,y)=1-\frac{|x-y|}{L}\geq 1-\frac{(1-\alpha)L}{L}=1-(1-\alpha)=\alpha$.
\end{proof}

Contrary to previous models \cite{buckles1,shenoi1}, in this case the converse does not hold, i.e. from the condition that the proximity of two elements is greater than some $\alpha$ it doesn't necessarily follow that those elements will belong to the same equivalence class formed for the given $\alpha$.

Depending on the choice of the similarity threshold $\alpha$, the size of the set $I_n$ could happen to be very small. For example, for $L=100$, $m=12$ and $n=8$ all the intervals $I_k$ will have length 12 except for $I_n$, whose length will be 4. This size difference cannot be avoided, except for the special values of $\alpha$ for which $\frac{1}{1-\alpha}$ is an integer.

The partition of any set into equivalence classes is unique for a given $\alpha$, so for each element we will always know the class to which it belongs. Therefore, it is possible to define a function that maps the pair (element,$\alpha$) to the class to which the element belongs for the given $\alpha$.

Let there be given the interval $I=[0,L]$ and $\alpha\in[0,1]$. Assume that for the given $\alpha$ $I$ is partitioned into $n+1$ classes, $I_1,I_2,\dots,I_{n+1}$, where $I_j=[(j-1)m,jm)$ for $j=1,2,\dots,n$ and $I_{n+1}=[nm,L]$. Any real $x\in I$ can be uniquely represented as $x=jm+r$, where $r<m$ \cite{koshy1}. By the definition of the class $I_j$ from the previous chapter, we see that $x$ belongs to the class $I_j$. Therefore, $x_1=j_1m+r_1$ and $x_2=j_2m+r_2$ will belong to the same equivalence class if and only if $j_1=j_2$.

\begin{pr}

The suppliers and parts database \cite{date1} contains for each supplier, among other records, the one about its status. Assume that valid status values are integers between 0 and 100. The table \ref{suppliers} (chapter \ref{sectionapplication}) contains in total 18 suppliers with their statuses, not all of which are distinct. The set of statuses that currently appear in the database is $S=\{10,20,25,30,35,40,45,50,55,60,65,75,80,90\}$. We will illustrate the content of the equivalence classes for a few different values of the proximity threshold $\alpha$:
\begin{enumerate}
  \item Let $\alpha=0.6$. Now $m=(1-0.6)\cdot 100=40$ and $n=\lfloor\frac{1}{1-0.6}\rfloor=2$. The interval $I=[0,100]$ will be divided in three parts, the first two of which are of the given length $m$: $I_1=[0,40),I_2=[40,80),I_3=[80,100)$. 
  All the elements from the same set $I_i,i=1,2,3$ are mutually proximate to a digree greater than or equal to 0.6. By $K_i$ we will denote the set $S\cap I_i$ (this is the equivalence class corresponding to the interval $I_i$). For this $\alpha$ value, the classes $K_i$ are as follows:
  $K_1=\{10,20,25,30,35\},$ $K_2=\{40,45,50,55,60,65,75\}$ and $K_3=\{80,90\}$.
  \item Let $\alpha=0.8$. Now $m=20$ and $n=5$, so all the intervals $I_i$ are of the same length: $I_1=[0,20),$ $I_2=[20,40),\dots,$ $I_5=[80,100)$. The equivalence classes are $K_1=\{10\},$ $K_2=\{20,25,30,35\},$ $K_3=\{40,45,50,55\},$ $K_4=\{60,65,75\}$ and $K_5=\{80,90\}.$
  \item For $\alpha=0.85$ we have $m=15,$ $n=6$ and equivalence classes $K_1=\{10\},$ $K_2=\{20,25\},$ $K_3=\{30,35,40\},$ $K_4=\{45,50,55\},$ $K_5=\{60,65\},$ $K_6=\{75,80\}$ and $K_7=\{90\}.$
  \item For $\alpha=0.92$, $m=8$ and we get $K_1=\{10\},$ $K_2=\{20\},$ $K_3=\{25,30\},$ $K_4=\{35\},$ $K_5=\{40,45\},$ $K_6=\{50,55\},$ $K_7=\{60\},$ $K_8=\{65\},$ $K_9=\{75\},$ $K_{10}=\{80\}$ i $K_{11}=\{90\}.$
\end{enumerate}

\end{pr}

\begin{pr}
Shenoi and Melton give the example of the domains of hair colours and build of arson suspects, together with the proximity relations over those domains \cite{shenoi1}. First we will show how by using the above method a similar proximity relation could be defined automatically. First we take the Build domain. It contains the following elements: very large (VL), large (L), average (A), small (S) and very small (VS). Obviously, on this domain we can define a linear order, which could not be well represented by a similarity relation. To the elements of this domain we will assign the following numerical values: VL - 0, L - 1, A - 2, S - 3, VS - 4. Now we have a set of five integers, between 0 and 4, so we let $L=4$. The relation $s$ is defined so that its value is not dependent on the particular elements, but only on their distance. Since, for example, the distance between VL and L is 1, the same as the distance between A and S, $s(VL,L)$ will be equal to $s(A,S)$. The similar holds for other element combinations. Therefore, instead of $s(a,b)$ we can write $s(|a-b|)$.

Using $\eqref{proximityformula}$ we get: $s(VL,L)=s(L,A)=s(A,S)=s(S,VS)=s(1)=1-\frac{1}{4}=0.75$. $s(2)=0.5,s(3)=0.25,s(4)=0$. In this way we get the values shown in the table \ref{sizesimilarity}.

\begin{table}

\begin{center}
\begin{tabular}{|c|c|c|c|c|c|}
  \hline
  $s$ & VL & L & A & S & VS \\
  \hline
  VL & 1 & 0.75 & 0.5 & 0.25 & 0 \\
  L & 0.75 & 1 & 0.75 & 0.5 & 0.25 \\
  A & 0.5 & 0.75 & 1 & 0.75 & 0.5 \\
  S & 0.25 & 0.5 & 0.75 & 1 & 0.75 \\
  VS & 0 & 0.25 & 0.5 & 0.75 & 1 \\
  \hline
\end{tabular}
\end{center}
\caption{Proximity of builds}
\label{sizesimilarity}
\end{table}

The other domain, Hair Colour, contains the following elements: black (Bk), dark brown (DB), auburn (A), red (R), light brown (LB), blond (Bd) and bleached (Bc). Analogous to the previous procedure, we get the values from the table \ref{hairsimilarity}. 

\begin{table}
\begin{center}
\begin{tabular}{|c|c|c|c|c|c|c|c|}
  \hline
  $s$ & Bk & DB & A & R & LB & Bd & Bc \\
  \hline
  Bk & 1 & 0.83 & 0.67 & 0.5 & 0.33 & 0.16 & 0 \\
  DB & 0.83 & 1 & 0.83 & 0.67 & 0.5 & 0.33 & 0.16 \\
  A & 0.67 & 0.83 & 1 & 0.83 & 0.67 & 0.5 & 0.33 \\
  R & 0.5 & 0.67 & 0.83 & 1 & 0.83 & 0.67 & 0.5 \\
  LB & 0.33 & 0.5 & 0.67 & 0.83 & 1 & 0.83 & 0.67\\
  Bd & 0.16 & 0.33 & 0.5 & 0.67 & 0.83 & 1 & 0.83 \\
  Bc & 0 & 0.16 & 0.33 & 0.5 & 0.67 & 0.83 & 1 \\
  \hline
\end{tabular}
\end{center}
\caption{Proximity of hair colours}
\label{hairsimilarity}
\end{table}

These values don't differ much from the ones chosen by Shenoi and Melton.

Now let $\alpha=0.8$ be a given proximity threshold. We will form the corresponding equivalence classes for this domain. To the element Bk we will assign the value 0, to the element DB 1 and so on, to Bc=6. The interval to which these elements belong is $I=[0,6]$ and its length is 6. $(1-0.8)\times 6=1.2$, so $m=1.2$ and the interval $I$ will be divided into the subintervals $I_1=[0,1.2)$, $I_2=[1.2,2.4)$, $I_3=[2.4,3.6)$, $I_4=[3.6,4.8)$ and $I_5=[4.8,6]$. Now $Bk,DB\in I_1$, $A\in I_2$, $R\in I_3$, $LB\in I_4$ and $Bd,Bc\in I_5$, so we get the equivalence classes $\{Bk,DB\}$, $\{A\}$, $\{R\}$, $\{LB\}$ and $\{Bd,Bc\}$.

By diminishing the value $\alpha$ we get a smaller number of classes with larger number of members. For example, for $\alpha=0.6$ the value $m$ will be 2.4, and the corresponding equivalence classes are $\{Bk,DB\}$, $\{A,R\}$ and $\{LB,Bd,Bc\}$. For $\alpha=0.5$ we get $m=3$ and the classes $\{Bk,DB,A\}$, $\{R,LB\}$ and $\{Bd,Bc\}$. For $\alpha=0.3$ the equivalence classes are $\{Bk,DB,A\}$ and $\{R,LB,Bd,Bc\}$. These examples show that, unlike the Shenoi and Melton model, the equivalence classes corresponding to a value of $\alpha$ do not necessarily have to be subsets of the classes corresponding to a larger value of $\alpha$. Some elements belonging to a same class can be split by increasing $\alpha$. Increasing the domain cardinality reduces the prominence of this occurrence.

\end{pr}

\subsection{Equalization of equivalence classes sizes}\label{equalization}

In the previous part of the text, the intervals that determine the equivalence classes were formed as follows: For a given $\alpha\in[0,1]$, the length of the interval, $m$, was given by the formula $m=(1-\alpha)L$, where $L$ is the length of the entire domain. This partition results in $n$ classes of the length $m$ and, sometimes, one additional, $n+1$-st, class of a smaller length, where $n=\lfloor\frac{1}{1-\alpha}\rfloor$. Only in the case when $\frac{1}{1-\alpha}$ is an integer the number of classes will be precisely $n$. We can change the definition of $m$ to $m=\frac{L}{n}$. In this case, all classes will be of the same length for each $\alpha\in[0,1]$, but that length and the number of classes will be identical for all values of $\alpha$ in a certain interval. For example, if $\alpha=0$, we get only one class, equal to the entire domain. $\alpha=0.5$ results in two equal classes, $\alpha=\frac{2}{3}=0.666...$ in three, etc. Also, we get two equal classes for each $\alpha$ between 0 and 0.5, three for each $\alpha$ between 0.5 and 0.666... etc. By increasing precision, i.e. by letting $\alpha$ approach 1, we get smaller and smaller intervals, i.e. the significance of the choice of $\alpha$ increases.

\section{Forming equivalence classes on two-dimensional domains by using proximity relations}

Let there be given a real number $L>0$ and the square $K=[0,L]\times[0,L]\subset\mathbb{R}^2$. For the points $M_1(x_1,y_1)$ and $M_2(x_2,y_2)$, by $d(M_1,M_2)$ we will denote their Euclidean distance: $d(M_1,M_2)=\sqrt{(x_2-x_1)^2+(y_2-y_1)^2}$. The proximity function of the points $M_1$ and $M_2$ is defined by $p(M_1,M_2)=1-\frac{d(M_1,M_2)}{\sqrt{2}L}$. This definition implies that $p(M_1,M_2)\geq\alpha$ if and only if $d(M_1,M_2)\leq(1-\alpha)\sqrt{2}L$.

The equivalence classes over the square $K$ depending on $\alpha\in[0,1]$, such that the proximity of all elements in a class is greater than $\alpha$ can be formed similarly as in the one-dimensional case. We will denote by $m$ the number $(1-\alpha)L$ and partition the interval $[0,L]$ as previously described. Two points will be in the same equivalence class (i.e. in the same square) if their projections on the coordinate axes are in the same equivalence classes defined on those axes.

More precisely: Let $\alpha\in[0,1]$ be given. Now $m=(1-\alpha)L$ and $n=\lfloor\frac{1}{1-\alpha} \rfloor$. In general, the interval $[0,L]$ will be partitioned into $n+1$ subintervals: $I_1=[0,m)$, $I_2=[m,2m),\dots$, $I_n=[(n-1)m,nm)$ and $I_{n+1}=[nm,L]$. The Cartesian products of these intervals are squares $I_{ij}=I_i\times I_j=[(i-1)m,im)\times[(j-1)m,jm)$, for $i,j=0,1,2,\dots,n$. For $i=n+1$ and $j\leq n$ we have $I_{n+1,j}=[nm,L]\times[(j-1)m,jm)$ and similar for the square $I_{i,n+1}$ for $i\leq n$. The last square, with the largest values $x$ and $y$ is $I_{n+1,n+1}=[nm,L]\times[nm,L]$.

\begin{df} Let $\alpha\in[0,1]$ be given. We say that the points $M_1$ and $M_2$ are in the relation $E_\alpha$ if and only if in the partition of the square $[0,L]\times[0,L]$ with respect to the given $\alpha$ they belong to the same square $I_{ij}$.
\end{df}

The relation $E_\alpha$ is obviously an equivalence relation since the squares $I_{ij}$ are disjoint and they cover the entire square $K$. What remains is to show that the points in any square are mutually proximate to a degree larger than or equal to $\alpha$.

\begin{teo} Let $\alpha\in[0,1]$. If the points $M_1$ and $M_2$ belong to the same square $I_{ij}$, then $p(M_1,M_2)\geq\alpha$.
\end{teo}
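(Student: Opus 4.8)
The plan is to reduce the two-dimensional claim to the one-dimensional construction already carried out, by bounding each coordinate separately and then combining the two bounds through the Euclidean distance formula. The normalizing constant $\sqrt{2}L$ in the definition of $p$ was chosen precisely so that this reduction closes, so the argument should be short.

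First I would unwind the hypothesis: if $M_1=(x_1,y_1)$ and $M_2=(x_2,y_2)$ both belong to $I_{ij}=I_i\times I_j$, then $x_1,x_2\in I_i$ and $y_1,y_2\in I_j$. I then need the fact, established in the one-dimensional construction, that each of the partition intervals has length at most $m=(1-\alpha)L$: the intervals $I_1,\dots,I_n$ have length exactly $m$, while the last one, $I_{n+1}=[nm,L]$, has length $L-nm$, which is $\le m$ because $nm\le L\le(n+1)m$. Consequently $|x_1-x_2|\le m$ and $|y_1-y_2|\le m$.

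Next I would substitute these two bounds into the distance formula: $d(M_1,M_2)=\sqrt{(x_2-x_1)^2+(y_2-y_1)^2}\le\sqrt{m^2+m^2}=m\sqrt{2}=(1-\alpha)\sqrt{2}L$. Since $p(M_1,M_2)\ge\alpha$ is equivalent to $d(M_1,M_2)\le(1-\alpha)\sqrt{2}L$ by the remark immediately preceding the theorem, this already gives the conclusion; written out, $p(M_1,M_2)=1-\frac{d(M_1,M_2)}{\sqrt{2}L}\ge 1-\frac{(1-\alpha)\sqrt{2}L}{\sqrt{2}L}=\alpha$.

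I do not expect a real obstacle; the only things needing a moment of care are the short last interval $I_{n+1}$ (handled by the inequality $L-nm\le m$ above) and the observation that the cell diameter $m\sqrt{2}$ divided by the square's diameter $L\sqrt{2}$ is exactly $1-\alpha$, which is what makes the chosen normalization work. If one wishes to be exhaustive, the degenerate cases can be dispatched at the outset: for $\alpha=0$ the inequality is trivial since $p\ge 0$ always, and for $\alpha=1$ (where $m=0$) each cell is a single point, so $M_1=M_2$ and $p(M_1,M_2)=1\ge\alpha$.
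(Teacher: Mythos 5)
Your proof is correct and follows essentially the same route as the paper's: bound each coordinate difference by $m$, combine via the Euclidean distance to get $d(M_1,M_2)\le\sqrt{2}m=(1-\alpha)\sqrt{2}L$, and conclude. Your extra care with the shorter last interval $I_{n+1}$ (via $L-nm\le m$) and the degenerate cases is a slight tightening of the paper's argument, which simply asserts the sides have length $m$, but the substance is identical.
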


\begin{proof} Let the points $M_1(x_1,y_1)$ and $M_2(x_2,y_2)$ be given. If $M_1$ and $M_2$ are in the same square $I_{ij}$ for a given $\alpha$, then $|x_2-x_1|\leq m$ and $|y_2-y_1|\leq m$, since the sides of the square $I_{ij}$ are of the length $m$. Now $d(M_1,M_2)=\sqrt{(x_2-x_1)^2+(y_2-y_1)^2}\leq\sqrt{m^2+m^2}=\sqrt{2m^2}=\sqrt{2}m=\sqrt{2}(1-\alpha)L$, so $p(M_1,M_2)=1-\frac{d(M_1,M_2)}{\sqrt{2}L}\geq 1-\frac{\sqrt{2}(1-\alpha)L}{\sqrt{2}L}=1-(1-\alpha)=\alpha$.
\end{proof}

\begin{pr}
The suppliers and parts database contains a table with the locations of all cities. The locations are given as $X$ and $Y$ coordinates in the square $K=[0,100]\times[0,100]$ (table \ref{citycoordinates}).
\begin{table}
\begin{center}
\begin{tabular}{|c|c|c|}
  \hline
  CITY & X & Y\\
  \hline
  Shire & 12 & 7\\
  Bree & 14 & 23\\
  Rivendell & 23 & 28\\
  Isengard & 45 & 48\\
  Moria & 45 & 70\\
  Gondor & 63 & 26\\
  Rohan & 68 & 22\\
  Lotlorien & 82 & 43\\
  Mordor & 91 & 82\\
  \hline
\end{tabular}
\end{center}

\caption{City locations}
\label{citycoordinates}
\end{table}

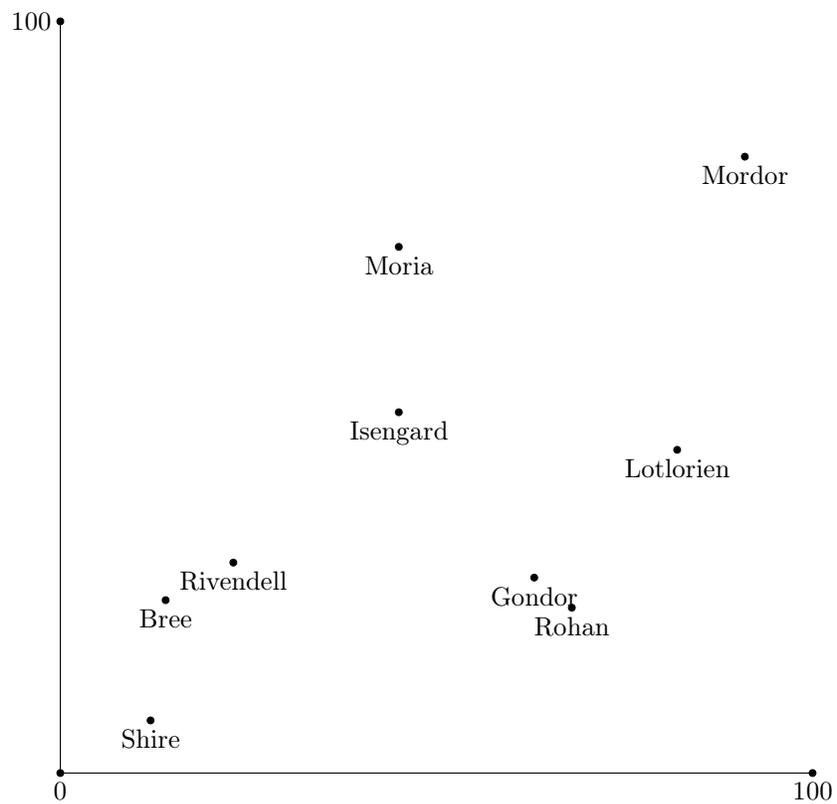
\begin{figure}
	\centering
\begin{tikzpicture} [scale=0.1]
\draw (0,0) -- (100,0);
\draw (0,0) -- (0,100);
\draw[fill] (12,7) circle [radius=0.45];
\node [below] at (12,7) {Shire};
\draw[fill] (14,23) circle [radius=0.45];
\node [below] at (14,23) {Bree};
\draw[fill] (23,28) circle [radius=0.45];
\node [below] at (23,28) {Rivendell};
\draw[fill] (45,48) circle [radius=0.45];
\node [below] at (45,48) {Isengard};
\draw[fill] (45,70) circle [radius=0.45];
\node [below] at (45,70) {Moria};
\draw[fill] (63,26) circle [radius=0.45];
\node [below] at (63,26) {Gondor};
\draw[fill] (68,22) circle [radius=0.45];
\node [below] at (68,22) {Rohan};
\draw[fill] (82,43) circle [radius=0.45];
\node [below] at (82,43) {Lotlorien};
\draw[fill] (91,82) circle [radius=0.45];
\node [below] at (91,82) {Mordor};
\draw[fill] (0,0) circle [radius=0.45];
\node [below] at (0,0) {0};
\draw[fill] (100,0) circle [radius=0.45];
\node [below] at (100,0) {100};
\draw[fill] (0,100) circle [radius=0.45];
\node [left] at (0,100) {100};
\end{tikzpicture}
\caption{City locations}
\label{citylocations}
\end{figure}

Let $\alpha=0.8$. Now $m=(1-\alpha)\cdot 100=20$ and $n=\lfloor\frac{1}{1-\alpha}\rfloor=\lfloor 5\rfloor=5$. In this case, $\frac{1}{1-\alpha}$ is an integer, so the number of equivalence classes on one coordinate axis will be $n$, where the last class is $I_n=[(n-1)m,nm]$, i.e. in this case $I_5=[80,100]$. The other classes are $I_1=[0,20),$ $I_2=[20,40)$, $I_3=[40,60)$ and $I_4=[60,80)$ 

The square $K=[0,100]\times[0,100]$ is partitioned into 25 squares with the size of the length 20. $K_{11}=I_1\times I_1=[0,20)\times[0,20)$, $K_{12}=[0,20)\times[20,40)$, $\dots$, $K_{ij}=[20(i-1),20i]\times[20(j-1),20j]...$. 

The cities in the database belong to the following equivalence classes: $Shire\in K_{11}, Bree\in K_{12}, Rivendel\in K_{22},Isengard\in K_{33}, Moria\in K_{34}, Gondor\in K_{42}, Rohan\in K_{42}, Lotlorien\in K_{43}, Mordor\in K_{55}$. Therefore, all cities are in separate equivalence classes, except for Gondor and Rohan (figure \ref{citiesalpha08})

\begin{figure}
	\centering
	\begin{tikzpicture} [scale=0.1]
	\draw[help lines, step=20] (0,0) grid (100,100);
	\draw[fill] (12,7) circle [radius=0.45];
	\node [below] at (12,7) {Shire};
	\draw[fill] (14,23) circle [radius=0.45];
	\node [below] at (14,23) {Bree};
	\draw[fill] (23,28) circle [radius=0.45];
	\node [below] at (23,28) {Rivendell};
	\draw[fill] (45,48) circle [radius=0.45];
	\node [below] at (45,48) {Isengard};
	\draw[fill] (45,70) circle [radius=0.45];
	\node [below] at (45,70) {Moria};
	\draw[fill] (63,26) circle [radius=0.45];
	\node [below] at (63,26) {Gondor};
	\draw[fill] (68,22) circle [radius=0.45];
	\node [below] at (68,22) {Rohan};
	\draw[fill] (82,43) circle [radius=0.45];
	\node [below] at (82,43) {Lotlorien};
	\draw[fill] (91,82) circle [radius=0.45];
	\node [below] at (91,82) {Mordor};
	\draw (0,0) -- (100,0);
	\draw (0,0) -- (0,100);
	\draw[fill] (0,0) circle [radius=0.45];
	\node [below] at (0,0) {0};
	\draw[fill] (100,0) circle [radius=0.45];
	\node [below] at (100,0) {100};
	\draw[fill] (0,100) circle [radius=0.45];
	\node [left] at (0,100) {100};
	\draw[fill] (0,20) circle [radius=0.45];
	\node [left] at (0,20) {20};
	\draw[fill] (0,40) circle [radius=0.45];
	\node [left] at (0,40) {40};
	\draw[fill] (0,60) circle [radius=0.45];
	\node [left] at (0,60) {60};
	\draw[fill] (0,80) circle [radius=0.45];
	\node [left] at (0,80) {80};
	\draw[fill] (20,0) circle [radius=0.45];
	\node [below] at (20,0) {20};
	\draw[fill] (40,0) circle [radius=0.45];
	\node [below] at (40,0) {40};
	\draw[fill] (60,0) circle [radius=0.45];
	\node [below] at (60,0) {60};
	\draw[fill] (80,0) circle [radius=0.45];
	\node [below] at (80,0) {80};
	\end{tikzpicture}
	\caption{Equivalence classes for $\alpha=0.8$}
	\label{citiesalpha08}
\end{figure}
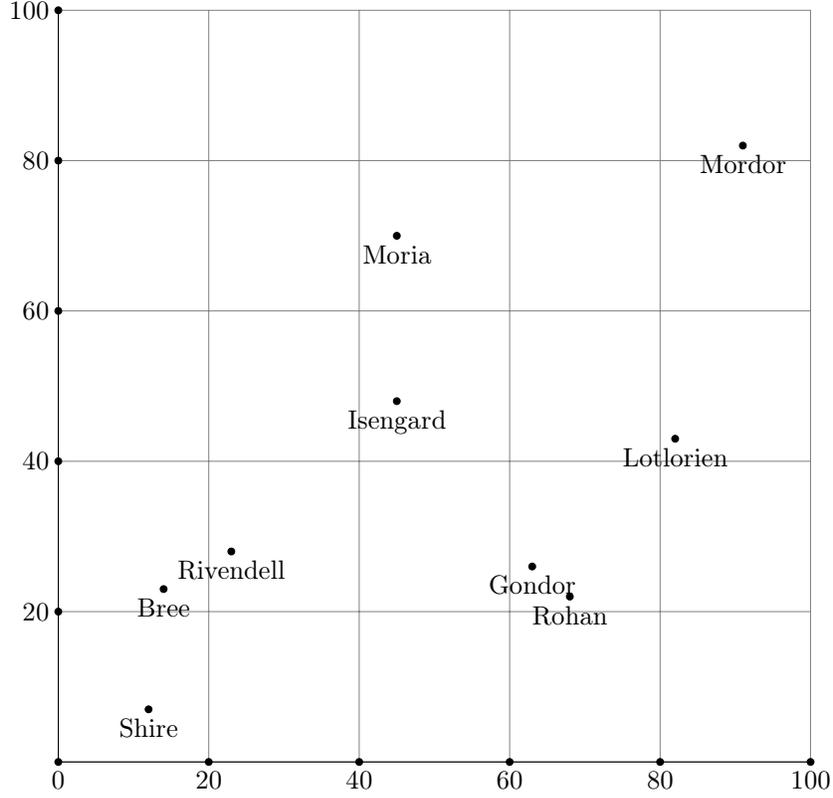

Now let $\alpha=0.6$. $m=(1-\alpha)\cdot 100=40$, $n=2$, so on each coordinate axes we get three equivalence classes: $I_1=[0,40)$, $I_2=[40,80)$, $I_3=[80,100)$. The equivalence classes on the square $K$ are $I_{11}=[0,40)\times[0,40)$, $I_{12}=[0,40)\times[40,80)$, $I_{13}=[0,40)\times[80,100]$, $I_{21}=[40,80)\times[0,40)$, $I_{22}=[40,80)\times[40,80)$, $I_{23}=[40,80)\times[80,100]$, $I_{31}=[80,100]\times[0,40)$, $I_{32}=[80,100]\times[40,80)$ and $I_{33}=[80,100]\times[80,100]$.

The layout of the cities in the equivalence classes is as follows: $Shire, Bree, Rivendel\in K_{11}, Isengard, Moria\in K_{22}, Gondor, Rohan\in K_{21}, Lotlorien\in K_{32}, Mordor\in K_{33}$ (figure \ref{citiesalpha06}).

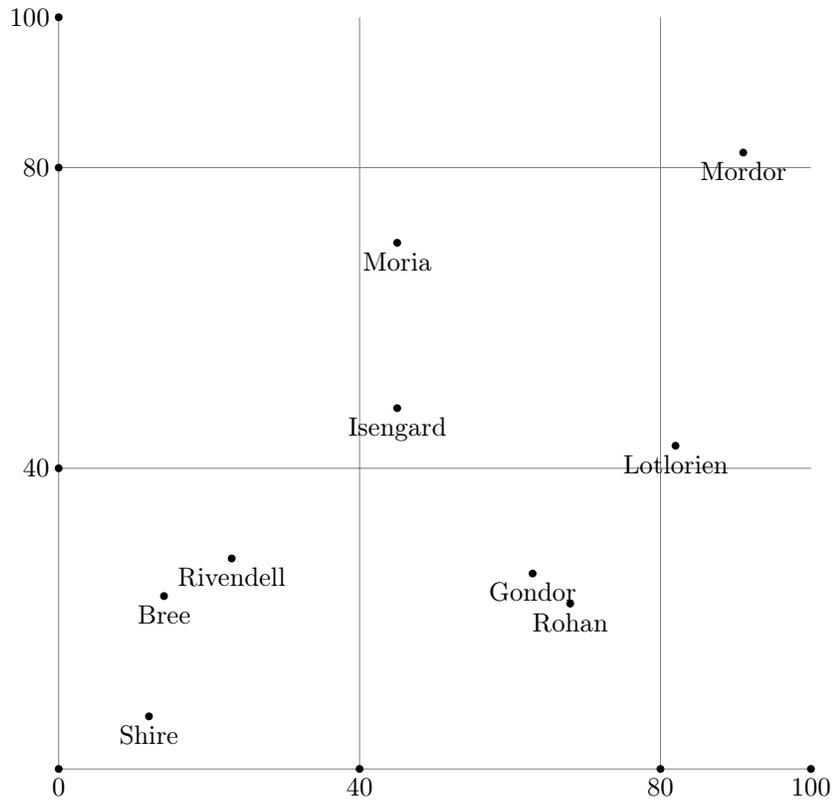
\begin{figure}
	\begin{tikzpicture} [scale=0.1]
	\draw[help lines, step=40] (0,0) grid (100,100);
	\draw[fill] (12,7) circle [radius=0.45];
	\node [below] at (12,7) {Shire};
	\draw[fill] (14,23) circle [radius=0.45];
	\node [below] at (14,23) {Bree};
	\draw[fill] (23,28) circle [radius=0.45];
	\node [below] at (23,28) {Rivendell};
	\draw[fill] (45,48) circle [radius=0.45];
	\node [below] at (45,48) {Isengard};
	\draw[fill] (45,70) circle [radius=0.45];
	\node [below] at (45,70) {Moria};
	\draw[fill] (63,26) circle [radius=0.45];
	\node [below] at (63,26) {Gondor};
	\draw[fill] (68,22) circle [radius=0.45];
	\node [below] at (68,22) {Rohan};
	\draw[fill] (82,43) circle [radius=0.45];
	\node [below] at (82,43) {Lotlorien};
	\draw[fill] (91,82) circle [radius=0.45];
	\node [below] at (91,82) {Mordor};
	\draw[fill] (0,0) circle [radius=0.45];
	\node [below] at (0,0) {0};
	\draw[fill] (100,0) circle [radius=0.45];
	\node [below] at (100,0) {100};
	\draw[fill] (0,100) circle [radius=0.45];
	\node [left] at (0,100) {100};
	\draw[fill] (0,40) circle [radius=0.45];
	\node [left] at (0,40) {40};
	\draw[fill] (40,0) circle [radius=0.45];
	\node [below] at (40,0) {40};
	\draw[fill] (80,0) circle [radius=0.45];
	\node [below] at (80,0) {80};
	\draw[fill] (0,80) circle [radius=0.45];
	\node [left] at (0,80) {80};
	\end{tikzpicture}
	\caption{Equivalence classes for $\alpha=0.6$}
	\label{citiesalpha06}
\end{figure}

\end{pr}
\section{Application}\label{sectionapplication}
The database contains a table with names, statuses and cities of several suppliers (table \ref{suppliers}).

\begin{table}
\begin{center}
\begin{tabular}{|c|c|c|}
  \hline
  SNAME & STATUS & CITY\\
  \hline
  Bagins & 20 & Shire\\
  Took & 55 & Shire\\
  Proudfoot & 30 & Shire\\
  Sauron & 80 & Mordor\\
  Elrond & 75 & Rivendell\\
  Arwen & 60 & Rivendell\\
  Glorfindel & 65 & Rivendell\\
  Eomer & 50 & Rohan\\
  Eowyn & 40 & Rohan\\
  Theoden & 25 & Rohan\\
  Grima & 35 & Gondor\\
  Gamgee & 60 & Shire\\
  Galadriel & 75 & Lothlorien\\
  Gimli & 80 & Moria\\
  Saruman & 90 & Isengard\\
  Balrog & 55 & Moria\\
  Denethor & 10 & Gondor\\
  Barliman & 45 & Bree\\
  \hline
\end{tabular}
\end{center}
\caption{Suppliers}
\label{suppliers}
\end{table}

When defining the proximity functions for attribute domains, one should also pay attention to the role the attribute plays in a query. The attribute SNAME in this case is the primary key and it will most often feature in query results, while the additional attributes STATUS and CITY will be used for restriction conditions. A typical example of a crisp query on this table could be ''Find the names of all suppliers from Mordor with status greater than 20''. Under these conditions, for proximity relations over the attributes STATUS and CITY we can keep the relations shown in the examples from the previous chapters, while for the SNAME attribute the proximity relation would look like this: $p(x,y)=1$ for all $x$ and $y$ and any $\alpha<1$. Since the attribute SNAME is the ''target'' of the query, i.e. we don't perform any comparisons of its values, it is natural that we disregard all differences between them, namely, for the output purposes we consider them all to be equal. This way, the content of the tuples produced by the merge operation depends only on the equivalence classes over the other two attributes.

In case that the query contains a comparison over the SNAME attribute (for example ''List all suppliers whose names begin with some of the first letters of the alphabet''), we would obviously have to define another proximity function. The similar would be with the attribute CITY if, for example, we compare the cities by names and not by position: ''List the names of all the suppliers from the cities whose names are similar to Rohan''. This example shows that, depending on the context, various different proximity functions could exist over a single attribute.

Assume that we don't make a difference in the names of the suppliers and that the proximity functions over the two remaining attributes are as previously defined. By merging the tuples of the previous relation for $\alpha=0.6$ we get the relation shown in the table \ref{classesalpha06}.

\begin{table}
\begin{center}
\begin{tabular}{|c|c|c|}
  \hline
  SNAME & STATUS & CITY\\
  \hline
  Bagins, Proudfoot & 20,30 & Shire\\
  Took, Arwen, Glorfindel,  & \multirow{2}*{45,55,60,65} & \multirow{2}*{Shire, Rivendell,Bree}\\
  Gamgee, Barliman& &\\
  Sauron & 80 & Mordor\\
  Elrond & 75 & Rivendell\\
  Eomer, Eowyn & 40,50 & Rohan\\
  Denethor, Theoden, Grima & 10,25,35 & Rohan, Gondor\\
  Galadriel & 75 & Lothlorien\\
  Gimli, Saruman & 80,90 & Moria, Isengard\\
  Balrog & 55 & Moria\\

  \hline
\end{tabular}
\end{center}
\caption{Equivalence classes for $\alpha=0.6$}
\label{classesalpha06}
\end{table}

Informally, the second tuple in the resulting relation would represent the suppliers having a medium status and located in the southwest of the map.

If, for example, we are interested only in the suppliers' location and not in their status, we would perform a projection of this relation onto the attributes SNAME and CITY, which would lead to an additional tuple merging. For example, Balrog would join Gimli and Saruman (which in the previous table was prevented by their lower status) and Eomer and Eowyn to Denethor, Theoden and Grima:
$K_1=\{10,20,25,30,35\},$ $K_2=\{40,45,50,55,60,65,75\}$ and $K_3=\{80,90\}$.
$Shire, Bree, Rivendel\in K_{11}, Isengard, Moria\in K_{22}, Gondor, Rohan\in K_{21}, Lotlorien\in K_{32}, Mordor\in K_{33}$ (table \ref{projection}).

\begin{table}
\begin{center}
\begin{tabular}{|c|c|}
  \hline
  SNAME & CITY\\
  \hline
   Bagins, Proudfoot,Took, Arwen, Glorfindel, & \multirow{2}*{Shire, Rivendell, Bree}\\
    Gamgee, Barliman, Elrond&\\
  Sauron & Mordor\\
  Denethor, Theoden, Grima, Eomer, Eowyn & Rohan, Gondor\\
  Galadriel & Lothlorien\\
  Gimli, Saruman, Balrog  & Moria, Isengard\\

  \hline
\end{tabular}
\end{center}
\caption{Projection over SNAME and CITY}
\label{projection}
\end{table}

\section{Comparison of methods}

In this section we will show the results of both approaches to forming equivalence classes on a joint example. Table \ref{cities} contains 23 cities in Great Britain located between longitudes 0 and 2 west and latitudes 51 and 53 north (figure \ref{cities-mapa}). By $X$ and $Y$ we denote the modified coordinates, obtained by mapping the real ones to the set $[0,2]\times[0,2]$.

\begin{table}
	\begin{center}
		\begin{tabular}{|c|c|c|c|c|c|}
			\hline
			\textbf{CITY} & \textbf{No}& \textbf{WL} & \textbf{X} & \textbf{NL} & \textbf{Y}\\
			\hline
			Peterborough	&1&0.2508&	1.7492&52.5739& 	1.5739\\		
			Solihull		&2&1.7782&0.2218&52.4128&		1.4128\\		
			Oxford			&3&1.2577&0.7423	&51.7520&	0.7520\\		
			London		&4&0.1181&1.8819&51.5099&		0.5099\\		
			Swindon		&5&1.7722&0.2278		&51.5685&0.5685\\		
			Northampton		&6&0.9027&1.0973&	52.2405&	1.2405\\		
			Rugby			&7&1.2650&0.7350	&52.3709&	1.3709\\		
			Sutton Coldfield&8	&1.8240&0.1760&52.5704&		1.5704\\		
			Salisbury		&9&1.7945&0.2055		&51.0688&0.0688\\		
			Bedford		&10&0.4607&1.5393	&52.1364&	1.1364\\		
			Frankton	&11	&1.3776&0.6224	&52.3284&	1.3284\\		
			Coventry	&12	&1.5197&0.4903	&52.4068&	1.4068\\		
			Slough		&13	&0.5950&1.4050	&51.5105&	0.5105\\		
			Esher		&14	&0.3659&1.6341	&51.3695&	0.3695\\		
			Epsom		&15	&0.2674&1.7326	&51.3360&	0.3360\\		
			Borehamwood	&16	&0.2723&1.7277&51.6577		&0.6577\\		
			Crawley		&17&0.1872&1.8128	&51.1091&	0.1091\\				
			St. Neots	&18	&0.2651&1.7349	&52.2301&	1.2301\\		
			Meriden		&19&1.6478&0.3522	&52.4387&	1.4387\\
			Nottingham 	&20	&1.1581&0.8419&52.9548		&1.9548\\	
			Derby		&21	&1.4746&0.5254&52.9225		&1.9225\\
			Reading		&22&0.9781&1.0219&51.4543		&0.4543\\	
			Berkshire	&23	&1.1854&0.8146&51.4670		&0.4670\\	
			\hline
		\end{tabular}
	\end{center}
\caption{List of cities and coordinates}
\label{cities}
\end{table}

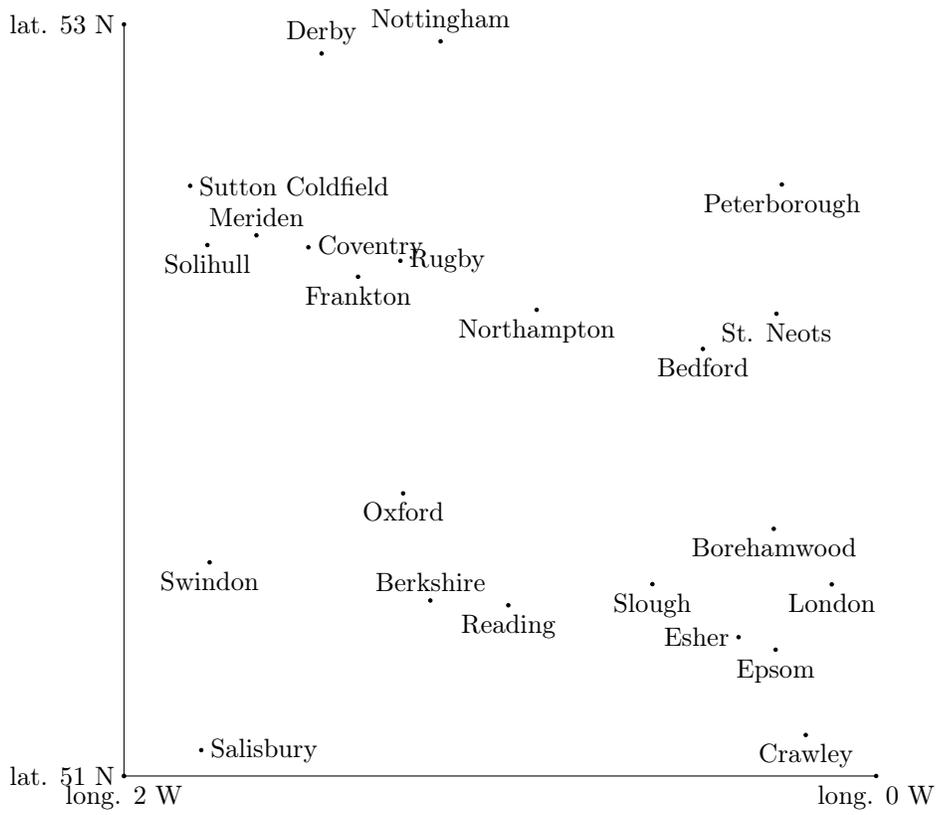
\begin{figure}
	\centering
	\begin{tikzpicture} [scale=5]
	\draw[fill] (1.7492,1.5739) circle [radius=0.0045];
	\node [below] at (1.7492,1.5739) {Peterborough};
	\draw[fill] (0.2218,1.4128) circle [radius=0.0045];
	\node [below] at (0.2218,1.4128) {Solihull};
	\draw[fill] (0.7423,0.7520) circle [radius=0.0045];
	\node [below] at (0.7423,0.7520) {Oxford};
	\draw[fill] (1.8819,0.5099) circle [radius=0.0045];
	\node [below] at (1.8819,0.5099) {London};
	\draw[fill] (0.2278,0.5685) circle [radius=0.0045];
	\node [below] at (0.2278,0.5685) {Swindon};
	\draw[fill] (1.0973,1.2405) circle [radius=0.0045];
	\node [below] at (1.0973,1.2405) {Northampton};
	\draw[fill] (0.1760,1.5704) circle [radius=0.0045];
	\node [right] at (0.1760,1.5704) {Sutton Coldfield};
	\draw[fill] (0.2055,0.0688) circle [radius=0.0045];
	\node [right] at (0.2055,0.0688) {Salisbury};
	\draw[fill] (1.5393,1.1364) circle [radius=0.0045];
	\node [below] at (1.5393,1.1364) {Bedford};
	\draw[fill] (0.6224,1.3284) circle [radius=0.0045];
	\node [below] at (0.6224,1.3284) {Frankton};
	\draw[fill] (0.4903,1.4068) circle [radius=0.0045];
	\node [right] at (0.4903,1.4068) {Coventry};
	\draw[fill] (1.4050,0.5105) circle [radius=0.0045];
	\node [below] at (1.4050,0.5105) {Slough};
	\draw[fill] (1.6341,0.3695) circle [radius=0.0045];
	\node [left] at (1.6341,0.3695) {Esher};
	\draw[fill] (1.7326,0.3360) circle [radius=0.0045];
	\node [below] at (1.7326,0.3360) {Epsom};
	\draw[fill] (1.7277,0.6577) circle [radius=0.0045];
	\node [below] at (1.7277,0.6577) {Borehamwood};
	\draw[fill] (1.8128,0.1091) circle [radius=0.0045];
	\node [below] at (1.8128,0.1091) {Crawley};
	\draw[fill] (1.7349,1.2301) circle [radius=0.0045];
	\node [below] at (1.7349,1.2301) {St. Neots};
	\draw[fill] (0.3522,1.4387) circle [radius=0.0045];
	\node [above] at (0.3522,1.4387) {Meriden};
	\draw[fill] (0.8419,1.9548) circle [radius=0.0045];
	\node [above] at (0.8419,1.9548) {Nottingham};
	\draw[fill] (0.5254,1.9225) circle [radius=0.0045];
	\node [above] at (0.5254,1.9225) {Derby};
	\draw[fill] (1.0219,0.4543) circle [radius=0.0045];
	\node [below] at (1.0219,0.4543) {Reading};
	\draw[fill] (0.8146,0.4670) circle [radius=0.0045];
	\node [above] at (0.8146,0.4670) {Berkshire};
	\draw[fill] (0.7350,1.3709) circle [radius=0.0045];
	\node [right] at (0.7350,1.3709) {Rugby};
	\draw (0,0) -- (2,0);
	\draw (0,0) -- (0,2);
	\draw[fill] (0,0) circle [radius=0.0045];
	\node [below] at (0,0) {long. 2 W};
	\draw[fill] (2,0) circle [radius=0.0045];
	\node [below] at (2,0) {long. 0 W};
	\draw[fill] (0,0) circle [radius=0.0045];

	\node [left] at (0,0) {lat. 51 N};
	\draw[fill] (0,2) circle [radius=0.0045];
	\node [left] at (0,2) {lat. 53 N};
	
	\end{tikzpicture}
	\caption{Region map}
	\label{cities-mapa}
\end{figure}

We will represent the map as a square with side length 2. Let $\alpha=0.4$. By using the formulas from section 5, we get $m=(1-0.4)\cdot 2=1.2$ and $n=\lfloor\frac{1}{0.6}\rfloor=\lfloor 1.67\rfloor=1$. In addition to one ''normal'' class, which we will denote by $I_{11}$, we get three additional, smaller, classes: ($I_{12}$, $I_{21}$ i $I_{22}$). The class $I_{11}$ contains the cities with coordinates between longitude 0.8 and 2 west and latitude 51 and 52.2 north. In the class $I_{12}$ are the cities with coordinates from 0.8 to 2 west longitude nad 52.2 and 53 north latitude and the classes $I_{21}$ and $I_{22}$ fill the remaining space.

Having in mind the city coordinates and the dimensions and positions of squares $I_{11}$ and $I_{12}$ and rectangles $I_{21}$ and $I_{12}$, we get the following equivalence classes:

$K_{11}=\{\text{Oxford, Swindon, Salisbury, Reading, Berkshire} \}$

$K_{12} = \{$Solihull, Northampton, Rugby, Sutton Coldfield, Frankton, Coventry, Meriden, Nottingham, Derby $\}$

$K_{21}=\{$London, Bedford, Slough, Esher, Epsom, Borehamwood, Crawley $\}$

$K_{22}=\{$Peterborough, St. Neots $\}$.

For $\alpha=0.6$ we get 9 classes, 4 of which have width and length 0.8 and the smallest one is of width and length 0.4 (the remaining 4 are rectangles with sides of different lengths)
The equivalence classes are as follows:

$K_{11}=\{$Oxford, Swindon, Salisbury$\}$

$K_{12}=\{$Solihull, Rugby, Sutton Coldfield, Frankton, Coventry, Meriden$\}$

$K_{13}=\{$Derby$\}$

$K_{21}=\{$Slough, Reading, Berkshire$\}$

$K_{22}=\{$Northampton, Bedford$\}$

$K_{23}=\{$Nottingham$\}$

$K_{31}=\{$London, Esher, Epsom, Borehamwood, Crawley $\}$

$K_{32}=\{$Peterborough, St. Neots$\}$.

The class $K_{33}$ is empty.

For $\alpha=0.8$ we get 25 classes of equal dimensions (the length of the side of the square is 0.4). The equivalence classes in non-empty squares are:

$\{$Derby$\}$

$\{$Nottingham$\}$

$\{$Solihull, Sutton Coldfield, Meriden$\}$

$\{$Rugby, Frankton, Coventry$\}$

$\{$Northampton$\}$

$\{$Peterborough, St. Neots$\}$

$\{$Bedford$\}$

$\{$Swindon$\}$

$\{$Oxford$\}$

$\{$Reading, Berkshire$\}$

$\{$Slough$\}$

$\{$London, Borehamwood$\}$

$\{$Salisbury$\}$

$\{$Esher, Epsom, Crawley$\}$.

Finally, for $\alpha=0.95$, each city is in a separate equivalence class.

Tables \ref{similaritytable} and \ref{similaritytable2} show the results of applying the formula for defining proximity relation to the set of the cities.

\begin{landscape}
\begin{table}
	
	\begin{center}
		\begin{tabular}{|c|c|c|c|c|c|c|c|c|c|c|c|}
			\hline
			$s$&1&2&3&4&5&6&7&8&9&10&11\\
			\hline
			1&1.000&0.457&0.540&0.621&0.355&0.741&0.634&0.444&0.238&0.828&0.592 \\
			2&0.457&1.000&0.703&0.332&0.701&0.685&0.818&0.942&0.525&0.524&0.855 \\
			3&0.540&0.703&1.000&0.588&0.807&0.787&0.781&0.648&0.693&0.687&0.792 \\
			4&0.621&0.332&0.588&1.000&0.415&0.621&0.493&0.290&0.387&0.748&0.469\\
			5&0.355&0.701&0.807&0.415&1.000&0.611&0.664&0.645&0.823&0.495&0.697 \\
			6&0.741&0.685&0.787&0.621&0.611&1.000&0.864&0.654&0.479&0.839&0.829 \\
			7&0.634&0.818&0.781&0.493&0.664&0.864&1.000&0.790&0.503&0.704&0.957 \\
			8&0.444&0.942&0.648&0.290&0.645&0.654&0.790&1.000&0.469&0.494&0.820 \\
			9&0.238&0.525&0.693&0.387&0.823&0.479&0.503&0.469&1.000&0.396&0.531 \\
			10&0.828&0.524&0.687&0.748&0.495&0.839&0.704&0.494&0.396&1.000&0.669 \\
			11&0.592&0.855&0.792&0.469&0.697&0.829&0.957&0.820&0.531&0.669&1.000\\
			12&0.551&0.905&0.752&0.415&0.689&0.777&0.913&0.875&0.516&0.617&0.946\\
			13&0.605&0.474&0.751&0.831&0.583&0.720&0.614&0.426&0.548&0.774&0.560\\
			14&0.572&0.379&0.657&0.899&0.498&0.638&0.524&0.332&0.484&0.727&0.507\\
			15&0.562&0.344&0.620&0.919&0.462&0.609&0.492&0.298&0.452&0.709&0.474 \\
			16&0.676&0.404&0.650&0.924&0.469&0.696&0.568&0.364&0.423&0.818&0.543\\
			17&0.482&0.273&0.559&0.856&0.417&0.527&0.413&0.224&0.432&0.623&0.398\\
			18&0.878&0.461&0.610&0.740&0.418&0.775&0.643&0.436&0.321&0.923&0.605\\
			19&0.504&0.953&0.720&0.367&0.689&0.727&0.863&0.922&0.513&0.567&0.897\\
			20&0.652&0.709&0.573&0.371&0.464&0.732&0.790&0.728&0.296&0.620&0.765\\
			21&0.550&0.790&0.579&0.308&0.510&0.685&0.791&0.825&0.335&0.546&0.787\\
			22&0.528&0.559&0.856&0.695&0.716&0.721&0.660&0.505&0.681&0.697&0.660 \\
			23&0.488&0.605&0.896&0.622&0.789&0.709&0.679&0.549&0.743&0.651&0.688\\

			\hline
		\end{tabular}
	\end{center}
\caption{Similarity table}
\label{similaritytable}
\end{table}
\end{landscape}
\begin{landscape}
	
\begin{table}

	\begin{center}
		\begin{tabular}{|c|c|c|c|c|c|c|c|c|c|c|c|c|}
			\hline
			$s$&12&13&14&15&16&17&18&19&20&21&22&23\\
			\hline
			1&0.551&0.605&0.572&0.562&0.676&0.482&0.878&0.504&0.652&0.550&0.528&0.488\\
			2&0.905&0.474&0.379&0.344&0.404&0.273&0.461&0.953&0.709&0.790&0.559&0.605\\
			3&0.752&0.751&0.657&0.620&0.650&0.559&0.610&0.720&0.573&0.579&0.856&0.896\\
			4&0.415&0.831&0.899&0.919&0.924&0.856&0.740&0.367&0.371&0.308&0.695&0.622 \\
			5&0.689&0.583&0.498&0.462&0.469&0.417&0.418&0.689&0.464&0.510&0.716&0.789\\
			6&0.777&0.720&0.638&0.609&0.696&0.527&0.775&0.727&0.732&0.685&0.721&0.709\\
			7&0.913&0.614&0.524&0.492&0.568&0.413&0.643&0.863&0.790&0.791&0.660&0.679\\
			8&0.875&0.426&0.332&0.298&0.364&0.224&0.436&0.922&0.728&0.825&0.505&0.549\\
			9&0.516&0.548&0.484&0.452&0.423&0.432&0.321&0.513&0.296&0.335&0.681&0.743\\
			10&0.617&0.774&0.727&0.709&0.818&0.623&0.923&0.567&0.620&0.546&0.697&0.651\\
			11&0.946&0.560&0.507&0.474&0.543&0.398&0.605&0.897&0.765&0.787&0.660&0.688 \\
			12&1.000&0.547&0.454&0.420&0.489&0.345&0.556&0.950&0.770&0.817&0.614&0.649 \\
			13&0.547&1.000&0.905&0.869&0.875&0.798&0.720&0.504&0.452&0.412&0.863&0.791 \\
			14&0.454&0.905&1.000&0.963&0.893&0.888&0.694&0.410&0.373&0.325&0.781&0.708 \\
			15&0.420&0.869&0.963&1.000&0.886&0.915&0.684&0.375&0.347&0.295&0.745&0.672\\
			16&0.489&0.875&0.893&0.886&1.000&0.804&0.798&0.441&0.445&0.383&0.740&0.670 \\
			17&0.345&0.798&0.888&0.915&0.804&1.000&0.603&0.302&0.263&0.214&0.695&0.625 \\
			18&0.556&0.720&0.694&0.684&0.798&0.603&1.000&0.506&0.593&0.507&0.627&0.577 \\
			19&0.950&0.504&0.410&0.375&0.441&0.302&0.506&1.000&0.748&0.818&0.579&0.620 \\
			20&0.770&0.452&0.373&0.347&0.445&0.263&0.593&0.748&1.000&0.888&0.466&0.474 \\
			21&0.817&0.412&0.325&0.295&0.383&0.214&0.507&0.818&0.888&1.000&0.451&0.474 \\
			22&0.614&0.863&0.781&0.745&0.740&0.695&0.627&0.579&0.466&0.451&1.000&0.927\\
			23&0.649&0.791&0.708&0.672&0.670&0.625&0.577&0.620&0.474&0.474&0.927&1.000\\
			\hline
		\end{tabular}
	\end{center}
\caption{Similarit table (cont.)}
\label{similaritytable2}
\end{table}
\end{landscape}

Using these values, the Shenoi and Melton approach for $\alpha=0.4$, $\alpha=0.6$ and $\alpha=0.8$ gives one equivalence class containing all cities. The reason for this is the density of the data over both coordinate axes. For any chosen pair of the cities there is a sufficient number of other cities between them to enable the merging into one class, even for large values of $\alpha$.

Only for $\alpha$ values very close to 1 the cities start to group into multiple classes. For example, for $\alpha=0.95$ the equivalence classes containing two elements are $\{$Rugby, Frankton$\}$ and $\{$Esher, Epsom$\}$. The other classes are one-member sets.

\section{Conclusion}

The incompleteness or imprecision of the data, as well as the imprecision of the database queries, is best modelled by the fuzzy relational databases. The fuzzy relational model relies, on each scalar domain, on a similarity relation, which is reflexive, symmetric and transitive and induces an equivalence relation on each domain, thus enabling the good properties of the crisp relational databases.

The problem with the fuzzy relational model defined with a similarity relation is the fact that the max-min transitivity property, that is included in the definition of the similarity relation, does not correspond well to some domains, especially the ones with linear ordering.

The fuzzy relational model defined by a proximity relation relaxes the conditions of similarity relation and enables a more natural way of expressing the relationships between the data of linearly ordered domains. At the same time, the transitive closure of the proximity relation enables the establishing of equivalence classes. The equivalence classes are defined on temporal domains and they depend on the content of the database and one possible problem are the situations where the data is ''densely'' distributed, so that the same equivalence class can contain very distant data. The relation obtained by the transitive closure of a proximity relation makes sense when the proximity of elements is defined on some temporally dependent feature, but often does not give good results for domains where the similarity of elements is observed through their defining (and thus unchangeable) characteristics.

The approach to defining equivalence classes by a proximity relation shown in this paper for one-dimensional and two-dimensional domains is independent of database content, based on one-dimensional and two-dimensional intervals, and enables all elements from the same equivalence class to be more proximate than a given value. In case of domains for defining entity characteristics the results are more natural and closer to expectations. The price to be paid is the fact that the proximity of the elements is only a necessary condition for belonging to a same equivalence class. Further research will be focused on reducing this one-sidedness.


\begin{thebibliography}{1}
	
	\bibitem{buckles1} Billy P. Buckles and Frederick E. Petry, A fuzzy representation of data for relational databases, \textit{Fuzzy Sets and Systems} \textbf{7} (1982), 213--226
	
	\bibitem{codd1} E. F. Codd, A relational model of data for large shared data banks,\textit{CACM} \textbf{13} (1970), 377--387.
	
	\bibitem{date1} C. J. Date, \textit{An Introduction to Database Systems}, Sixth Edition, Addison-Wesley, 1994.
	
	\bibitem{koshy1} Thomas Koshy, \textit{Elementary Number Theory with Applications}, Second Edition, Academic Press, 2007.
	
	\bibitem{maier1} David Maier, \textit{The Theory of Relational Databases}, Computer Science Press, Inc., 1983.
	
	\bibitem{petry1} Frederick E. Petry, {\em Fuzzy Databases: Principles and Applications}, Boston: Kluwer Academic Publishers, 1996.
	
	\bibitem{shenoi1} Sujeet Shenoi and Austin Melton, Proximity relations in the fuzzy relational database model, \textit{Fuzzy Sets and Systems} \textbf{31} (1989), 285--296
	
	\bibitem{shenoi2} Sujeet Shenoi, Austin Melton and L. T. Fan, An equivalence classes model of fuzzy
	relational databases, \textit{Fuzzy Sets and Systems} \textbf{38} (1990), 153--170
	
	\bibitem{zadeh1} Lotfi A. Zadeh, Similarity relations and fuzzy orderings, \textit{Information Sciences} \textbf{3} (1971), 177--200
	
	\bibitem{zaniolo1} Carlo Zaniolo, Database relations with null values, \textit{Journal of Computer and System Sciences} \textbf{28} (1984), 142--166
	
	
	
	


	

\end{thebibliography}
\end{document}